\def\maketag@@@#1{\hbox{\m@th\normalfont\normalsize#1}}
\renewcommand{\qwbundle}[2][]{%
	\pgfkeys{/quantikz/gates/.cd,style=,Strike Width=0.08cm,Strike Height=0.12cm,#1}%
	\pgfkeysgetvalue{/quantikz/gates/style}{\qz@style}%
	\pgfkeysgetvalue{/quantikz/gates/Strike Width}{\qz@sw}%
	\pgfkeysgetvalue{/quantikz/gates/Strike Height}{\qz@sh}%
	\expanded{%
		\noexpand\arrow[strike arrow={\qz@sw}{\qz@sh}{\unexpanded{#2}},\qz@style,phantom]{l}%
	}%
}
\newtheorem{theorem}{Theorem}
\newtheorem{corollary}[theorem]{Corollary}
\newtheorem{definition}[theorem]{Definition}
\newtheorem{proposition}[theorem]{Proposition}
\newcommand\numberthis{\addtocounter{equation}{1}\tag{\theequation}}
\definecolor{codegreen}{rgb}{0,0.6,0}
\definecolor{codegray}{rgb}{0.5,0.5,0.5}
\definecolor{codepurple}{rgb}{0.58,0,0.82}
\definecolor{backcolour}{rgb}{0.97,0.97,0.97}
\lstdefinestyle{mystyle}{
	backgroundcolor=\color{backcolour},   
	commentstyle=\color{codegreen},
	keywordstyle=\color{blue},
	numberstyle=\tiny\color{codegray},
	stringstyle=\color{codepurple},
	basicstyle=\ttfamily\scriptsize,
	breakatwhitespace=false,         
	breaklines=true,                 
	captionpos=b,                    
	keepspaces=true,                 
	numbers=left,                    
	numbersep=5pt,                  
	showspaces=false,                
	showstringspaces=false,
	showtabs=false,                  
	tabsize=2
}
\newcolumntype{P}[1]{>{\centering\arraybackslash}p{#1}}
\newcolumntype{M}[1]{>{\centering\arraybackslash}m{#1}}
\newcommand{\ou}[1]{{\color{orange} \textsf{OU:} #1}}
\tikzset{FlowChart/.style={
		startstop/.style = {rectangle, rounded corners, draw, fill=red!30,
			minimum width=3cm, minimum height=1cm, align=center,
			on chain, join=by arrow},
		process/.style = {rectangle, draw, fill=orange!30,
			text width=5cm, minimum height=1cm, align=center,
			on chain, join=by arrow, text width = 10cm},
		decision/.style = {diamond, aspect=1.5, draw, fill=green!30,
			minimum width=3cm, minimum height=1cm, align=center,
			on chain, join=by arrow},
		io/.style = {trapezium, trapezium stretches body,   
			trapezium left angle=70, trapezium right angle=110,
			draw, fill=blue!30,
			minimum width=3cm, minimum height=1cm,
			text width =\pgfkeysvalueof{/pgf/minimum width}-2*\pgfkeysvalueof{/pgf/inner xsep},
			align=center,
			on chain, join=by arrow, text width = 10cm},
		arrow/.style = {thick,-Triangle}
	}
}
\begin{document}
	
	\title{Practical Quantum Circuit Implementation for Simulating Coupled Classical Oscillators}

	\author{Natt Luangsirapornchai}\thanks{Equal contribution}
	\affiliation{Department of Computer Engineering, Faculty of Engineering, Chulalongkorn University, Bangkok 10330, Thailand}
	
	\author{Peeranat Sanglaor}\thanks{Equal contribution}
	\affiliation{Chula Intelligent and Complex Systems Lab, Department of Physics, Faculty of Science, Chulalongkorn University, Bangkok 10330, Thailand}
	
	\author{Apimuk Sornsaeng}
	\affiliation{Chula Intelligent and Complex Systems Lab, Department of Physics, Faculty of Science, Chulalongkorn University, Bangkok 10330, Thailand}
	\affiliation{Science, Mathematics and Technology Cluster, Singapore University
		of Technology and Design, 8 Somapah Road, 487372 Singapore}
	
	\author{St\'{e}phane Bressan}
	\affiliation{School of Computing, National University of Singapore, 13 Computing Drive, 117417, Singapore, Singapore}
	
	\author{Thiparat Chotibut}
	\email[Correspondence to: ]{thiparatc@gmail.com}
	\affiliation{Chula Intelligent and Complex Systems Lab, Department of Physics, Faculty of Science, Chulalongkorn University, Bangkok 10330, Thailand}
	
	\author{Kamonluk Suksen}
	\email[Correspondence to: ]{kamonluk@cp.eng.chula.ac.th}
	\affiliation{Department of Computer Engineering, Faculty of Engineering, Chulalongkorn University, Bangkok 10330, Thailand}
	
	\author{Prabhas Chongstitvatana}\email[Correspondence to: ]{ prabhas.c@chula.ac.th}
	\affiliation{Department of Computer Engineering, Faculty of Engineering, Chulalongkorn University, Bangkok 10330, Thailand}

	\date{\today}
	
	\begin{abstract}
		Simulating large-scale coupled-oscillator systems presents substantial computational challenges for classical algorithms, particularly when pursuing first-principles analyses in the thermodynamic limit. Motivated by the quantum algorithm framework proposed by Babbush \textit{et al.} \cite{Babbush_2023}, we present and implement a detailed quantum circuit construction for simulating one-dimensional spring-mass systems. Our approach incorporates key quantum subroutines, including block encoding, quantum singular value transformation (QSVT), and amplitude amplification, to realize the unitary time-evolution operator associated with simulating classical oscillators dynamics. In the uniform spring-mass setting, our circuit construction requires a gate complexity of \(\mathcal{O}\bigl(\log_2^2 N\,\log_2(1/\varepsilon)\bigr)\), where \(N\) is the number of oscillators and \(\varepsilon\) is the target accuracy of the approximation. For more general, heterogeneous spring-mass systems, the total gate complexity is \(\mathcal{O}\bigl(N\log_2 N\,\log_2(1/\varepsilon)\bigr)\). Both settings require \(\mathcal{O}(\log_2 N)\) qubits. Numerical simulations agree with classical solvers across all tested configurations, indicating that this circuit-based Hamiltonian simulation approach can substantially reduce computational costs and potentially enable larger-scale many-body studies on future quantum hardware. 
	\end{abstract}
	
	
	\maketitle
	
	\section{Introduction} \label{sec:background}
	
	\begin{figure*}
		\centering
		\includegraphics[width=0.95\linewidth]{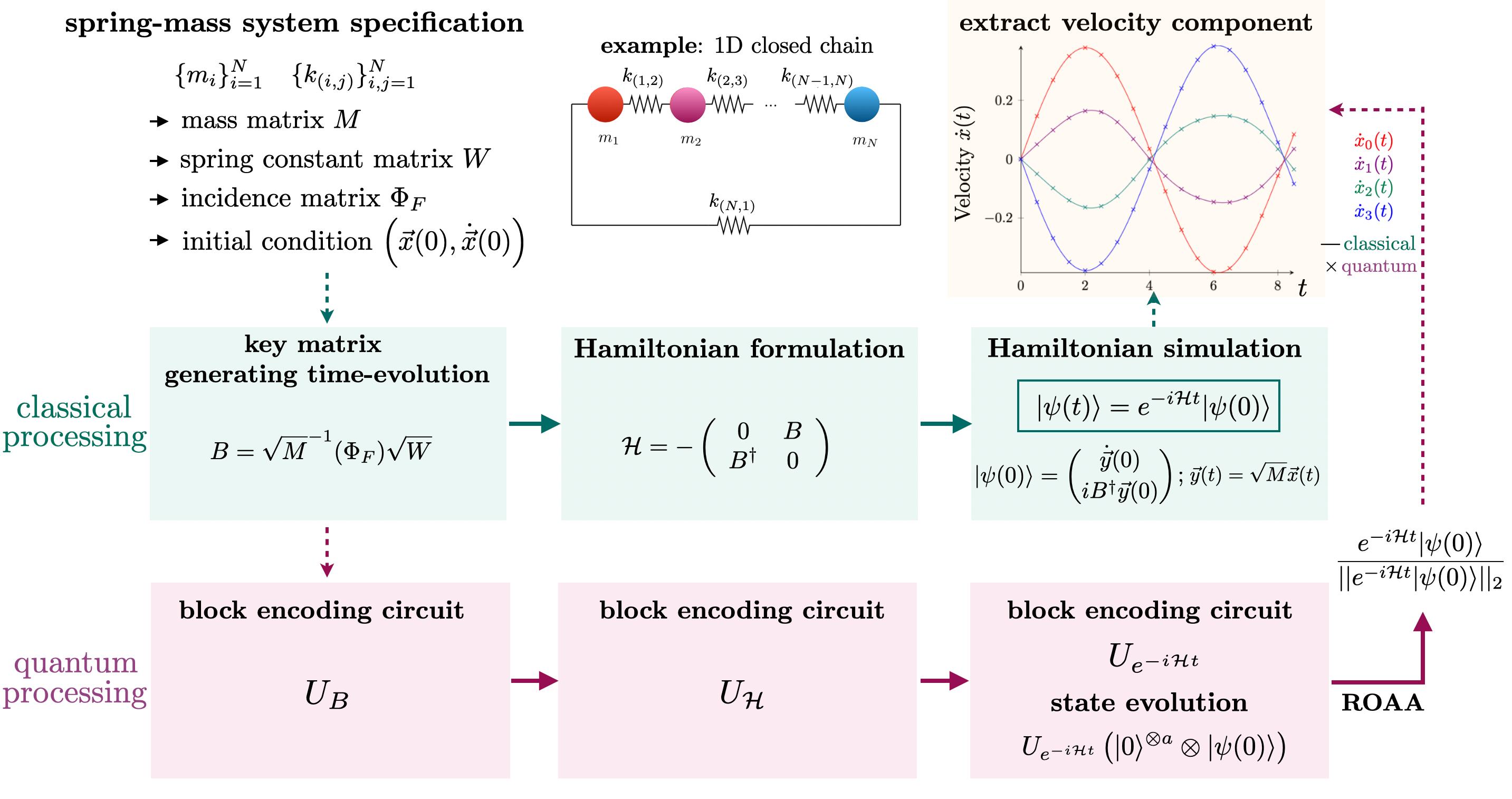}
		\caption{{\bf A high-level workflow illustrating our circuit-based quantum algorithm for simulating a one-dimensional chain of coupled oscillators.} 
			(1) Define the classical system, specifying the initial condition $\left(\vec{x}(0),\dot{\vec{x}}(0)\right)$, masses \(\{m_i\}_{i=1}^N\) and spring constants \(\{k_{(i,j)}\}_{i,j=1}^{N}\), whose connectivity structure can be encoded by the incidence matrix $\Phi_F$.
			(2) Following  \cite{Babbush_2023}, rewrite the classical Newtonian equations of motion in a matrix form and then reformulate them as a Hamiltonian simulation problem with associated Hamiltonian \(\mathcal{H}\), using mass‐weighted state variable \(\vec{y}(t)=\sqrt{M}\,\vec{x}(t)\) (Section \ref{sec:CCO}). 
			(3) Build a block encoding circuit of \(\mathcal{H}\) (via block encoding circuit of \(\sqrt{M}^{-1}\), \(\Phi_F\), \(\sqrt{W}\)) (Section \ref{subsec:BE_B} and \ref{subsec:BE_H}).
			(4) Apply QSVT and linear‐combination‐of‐unitaries (LCU) techniques to implement a block encoding circuit of \(e^{-i\mathcal{H}t}\) (Section \ref{subsec:BE_eiht}). 
			(5) Robust oblivious amplitude amplification (ROAA) boosts the success probability of measuring the correct time-evolved state. (Section \ref{subsec:fullHS}).
			(6) Finally, we extract the time‐evolved state vector, read off velocities directly, and (optionally) reconstruct displacements via a finite‐difference procedure.  
			Classical and quantum results are then compared in post‐processing (Section \ref{sec:results_exp}).}
		\label{fig:diagram}
	\end{figure*}
	
	Simulating the dynamics of large-scale coupled oscillators, ubiquitous in science and engineering, quickly becomes intractable on classical hardware as system size grows large. Recent work has shown that even a seemingly straightforward task of spring-mass simulation can be BQP-complete \cite{Babbush_2023}, suggesting that classical methods may be fundamentally limited for certain instances. This motivates the search for quantum algorithms capable of using quantum hardware to handle such oscillator networks in polynomial or near-polynomial time.
	
	Indeed, a spectrum of quantum algorithms has been proposed to tackle a variety of tasks, from optimization \cite{Abbas_2024}, to quantum chemistry simulations \cite{Cao_2019, Bauer_2020}, and quantum machine learning  \cite{Biamonte_2017}. However, many such methods present high-level formulations (e.g., trotterization, linear-combination-of-unitaries, qDRIFT) and often do not exhaustively detail their gate-level instantiations. In particular, the literature on explicitly implementing the time-evolution of large-scale, classical spring–mass systems typically offers conceptual resource estimates or broad procedures, rather than fully worked-out circuit decompositions.
	
	In this work, we bridge that gap by building on the  framework by Babbush \emph{et al.} \cite{Babbush_2023} to provide a \emph{complete circuit-level implementation} of Hamiltonian simulation for one-dimensional spring-mass systems. Our construction relies on the notion of \emph{block encoding} \cite{Gily_n_2019}, an approach that embeds non-unitary operators into suitably enlarged unitaries. It further exploits the versatility of \emph{quantum singular value transformation} (QSVT) to implement the time-evolution operator $e^{-i \mathcal{H}t}$ with a cost that scales polynomially in $\alpha t$ and $\log\bigl(1/\varepsilon\bigr)$, where $\alpha$ is the sub-normalization constant of the Hamiltonian $\mathcal{H}$ and $\varepsilon$ is the desired approximation error \cite{Gily_n_2019}. While conventional algorithms such as 
	trotterization~\cite{Lloyd1996UniversalQS}, qDRIFT~\cite{Campbell2019}, or 
	quantum phase estimation have proven effective for various Hamiltonian simulations, 
	QSVT offers more refined polynomial approximations and unifies multiple subroutines (\emph{e.g.}, Grover search, quantum linear systems, and Gibbs 
	sampling) into a cohesive framework~\cite{Martyn2021, Low_2017}. Here, we integrate QSVT with \emph{robust oblivious amplitude amplification} (ROAA) \cite{Berry_Childs_Cleve_Kothari_Somma_2014}, ensuring that, with high probability, the ancilla measurement projects onto the correct time-evolved oscillator state.
	
	Practically, our algorithm proceeds in three main stages: (1) recasting the classical oscillator chain’s equations of motion into a block-encoded Hamiltonian simulation formalism, (2) applying QSVT-based polynomial approximations to realize $e^{-i\mathcal{H}t}$, and (3) extracting the velocities from the resulting quantum state and post-processing the displacements. We deploy \texttt{QSPPACK} \cite{Dong_2021, Wang_2022, dong2022infinite} to compute the necessary QSVT phase angles, and then benchmark our quantum outputs against classical solvers (e.g., Runge–Kutta) over a range of parameter regimes including uniform versus heterogeneous masses and springs. Across all tested examples, our circuit-based approach accurately reproduces the oscillator trajectories.
	
	The paper is organized as follows. In Section~\ref{sec:CCO}, we show how the dynamics of classical coupled oscillators can be recast as a quantum Hamiltonian simulation problem. Section~\ref{sec:BE} reviews the core subroutines: block encoding, QSVT , and ROAA. In Section~\ref{sec:result_theory}, we detail how these ingredients combine to yield a coherent quantum circuit for constructing the time-evolution operator and provide gate-complexity estimates. Numerical simulations of various oscillator setups, demonstrating the accuracy of our approach, are presented in Section~\ref{sec:results_exp}. Finally, Section~\ref{sec:conc} concludes our work, discusses the limitations, and outlines future research directions.

	
	\section{Classical Coupled-Oscillator Dynamics in a Hamiltonian Simulation Framework}
	\label{sec:CCO}
	
	We focus on a one-dimensional chain of \(N\) coupled oscillators, each labeled by an index \(j \in \{1,\dots,N\}\). For any pair of adjacent oscillators \(j\) and \(j\pm 1\), a harmonic spring with constant \(k_{(j,j\pm 1)}\) mediates their interaction. We consider two primary boundary conditions:
	\begin{enumerate}
		\item \emph{Periodic boundary condition (closed-chained)}: A spring of constant \(k_{(1,N)}\)  connects the first and last oscillators.
		\item \emph{Open boundary condition (open-chained)}: No spring links oscillator 1 and \(N\), effectively setting \(k_{(1,N)}=0\).
	\end{enumerate}
	
	\subsection{Equations of Motion and Matrix Formulation}
	
	Let \(\vec{x}(t) = \bigl(x_1(t),\,x_2(t),\dots,x_N(t)\bigr)^\mathsf{T}\) denote the instantaneous displacements of the oscillators at time \(t\). Each oscillator \(j\) has mass \(m_j\) and is displaced by \(x_j(t)\) from its equilibrium. The classical equation of motion (EOM) for \(x_j(t)\) is
	\begin{equation}
		m_j\,\ddot{x}_j(t)
		\;=\;
		-\,k_{(j,j-1)}\bigl[x_j(t) - x_{j-1}(t)\bigr]
		\;-\;
		k_{(j,j+1)}\bigl[x_j(t) - x_{j+1}(t)\bigr],
		\label{eq:eom}
	\end{equation}
	where each \(k_{(i,j)}>0\) unless no spring is present, in which case \(k_{(i,j)}=0\). Collecting all displacements into the vector \(\vec{x}(t)\), we define the diagonal \emph{mass matrix} 
	\begin{equation}\label{eq:mass_matrix}
		M
		\;\coloneqq\;
		\mathrm{diag}\bigl(m_1,\,m_2,\dots,m_N\bigr).
	\end{equation}
	The EOMs for all oscillators can be compactly written in matrix form as
	\begin{equation}
		M\,\ddot{\vec{x}}(t)
		\;=\;
		-\,F\,\vec{x}(t),
		\label{eq:EOM}
	\end{equation}
	where \(F\) is the \emph{spring matrix}. The negative of each spring constant appears off-diagonal, and the diagonal entries capture the local sum of connected springs. Specifically,
	\begin{eqnarray}
		\label{eq:cyclic_CO_F}
		F \;\coloneqq\;
		\begin{pmatrix}
			k_{(1,N)} + k_{(1,2)} & -\,k_{(1,2)}         & 0                & \cdots & -\,k_{(1,N)} \\[3pt]
			-\,k_{(1,2)}         & k_{(1,2)} + k_{(2,3)}& -\,k_{(2,3)}     & \cdots & 0            \\[3pt]
			0                    & -\,k_{(2,3)}        & \ddots           & \ddots & \vdots       \\[3pt]
			\vdots               & \vdots              & \ddots           & k_{(N-2,N-1)} + k_{(N-1,N)} & -\,k_{(N-1,N)} \\[3pt]
			-\,k_{(1,N)}         & 0                   & \cdots           & -\,k_{(N-1,N)} & k_{(N-1,N)} + k_{(1,N)}
		\end{pmatrix}.
	\end{eqnarray}
	For the open boundary condition, one sets \(k_{(1,N)}=0\), eliminating the top-right and bottom-left entries.
	
	A common way to numerically solve ~\eqref{eq:EOM} on a classical computer is via standard time-stepping (e.g., fourth-order Runge–Kutta). While RK4 achieves \(\mathcal{O}(\Delta t^5)\) local error per step, it demands \(\mathcal{O}\!\bigl(N\,(t_f-t_i)/\Delta t\bigr)\) time complexity and can become burdensome for large \(N\) (thermodynamics limit) or extended simulation intervals.
	
	\subsection{Mapping to a Hamiltonian Simulation Formalism}
	
	Following Babbush \emph{et al.}~\cite{Babbush_2023}, we now reformulate these classical EOMs into a framework reminiscent of the Schrödinger equation. Define
	\begin{equation}
		\vec{y}(t)
		\;\coloneqq\;
		\sqrt{M}\,\vec{x}(t),
		\label{eq:CO_y}
	\end{equation}
	so that each component of \(\vec{y}\) is a mass-weighted displacement. Differentiating, one finds
	\[
	\ddot{\vec{y}}(t)
	\;=\;
	\mathcal{A}\,\vec{y}(t),
	\quad
	\]
	where 
	\begin{equation} \label{eq:CO_A}
		\mathcal{A}
		\;\coloneqq\;
		\sqrt{M}^{-1}\,F\,\sqrt{M}^{-1}.
	\end{equation}
	Here, \(\mathcal{A}\) is a rescaled version of the spring matrix \(F\). To recast this system as a Hamiltonian simulation, define the combined state vector
	\begin{equation}
		\label{eq:CO_state}
		\ket{\psi(t)}
		\;\;\propto\;
		\begin{pmatrix}
			\dot{\vec{y}}(t)\\[3pt]
			i\,B^\dagger\,\vec{y}(t)
		\end{pmatrix},
	\end{equation}
	where \(B\) satisfies \(B\,B^\dagger = \mathcal{A}\). By construction, \(\ket{\psi(t)}\) obeys an equation structurally analogous to a solution of the Schrödinger equation
	\begin{equation}
		\label{eq:CO_state_sol}
		\begin{pmatrix}
			\dot{\vec{y}}(t)\\[4pt]
			i\,B^\dagger\,\vec{y}(t)
		\end{pmatrix}
		\;=\;
		e^{-\,i\,\mathcal{H}\,t/\hbar}
		\begin{pmatrix}
			\dot{\vec{y}}(0)\\[4pt]
			i\,B^\dagger\,\vec{y}(0)
		\end{pmatrix},
	\end{equation}
	where the Hamiltonian \(\mathcal{H}\) has a block-off-diagonal structure
	\begin{equation}
		\mathcal{H}
		\;=\;
		-\,
		\begin{pmatrix}
			0 & B\\[2pt]
			B^\dagger & 0
		\end{pmatrix}.
		\label{eq:CO_H}
	\end{equation}
	Namely, \(\ket{\psi(t)}\) evolves according to a linear first-order differential equation in the same form as the Schrödinger equation:
	\(\displaystyle \partial_t \ket{\psi(t)} = -\,i\,\mathcal{H}\,\ket{\psi(t)}/\hbar.\)
	Hence, its formal solution is 
	\(\displaystyle \ket{\psi(t)} = e^{-\,i\,\mathcal{H}\,t/\hbar}\ket{\psi(0)},\)
	the standard time-evolution in the Schrödinger picture.
	In the following, we adopt natural units (\(\hbar=1\)) and omit the explicit \(\hbar\) in the exponent.
	
	From a quantum simulation standpoint, ~\eqref{eq:CO_state_sol} indicates that classical spring–mass dynamics can be recast as a time-evolution operator \(e^{-\,i\,\mathcal{H}\,t}\). Realizing this time-evolution operator on a quantum device forms a basis of our proposed algorithm. In the next sections (particularly Section~\ref{sec:BE}), we detail how techniques like \emph{block encoding} and \emph{quantum singular value transformation} (QSVT) can implement \(e^{-\,i\,\mathcal{H}\,t}\) in a quantum circuit.

	\subsection{Graph-Laplacian Perspective and the Construction of Matrix \boldmath\(B\)}
	
	From ~\eqref{eq:cyclic_CO_F}, one observes that each row of the spring matrix \(F\) sums to the total spring constant connecting a given oscillator to its neighbors, while off-diagonal entries \(-\,k_{(i,j)}\) capture the pairwise couplings. Consequently, \(F\) naturally takes the form of a weighted \emph{graph Laplacian}. Indeed, if we view the oscillator indices as vertices in a graph, then \(k_{(i,j)}\) corresponds to an edge weight between vertices \(i\) and \(j\). Concretely, one may factorize \(F\) via
	\[
	F
	\;=\;
	\Phi_F\,W\,\Phi_F^\dagger,
	\]
	where \(\Phi_F\) is the \emph{incidence matrix} of this weighted graph, and \(W\) is the diagonal matrix collecting all relevant spring constants given by
	\begin{equation}\label{eq: W_matrix}
		W
		\;\coloneqq\;
		\mathrm{diag}\bigl(k_{(1,2)},\,k_{(2,3)},\,\ldots,\,k_{(N-1,N)},\,k_{(1,N)}\bigr).
	\end{equation}
	Recalling ~\eqref{eq:CO_A}, we now explicitly define the \emph{key matrix} $B$ satisfying
	$B\,B^\dagger = \mathcal{A}$ that essentially encapsulates the \emph{mass-spring network
		coupling structure} as 
	\begin{equation}
		\label{eq:CO_weightB}
		B
		\;\coloneqq\;
		\sqrt{M}^{-1}\,\Phi_F\,\sqrt{W}.
	\end{equation}
	In a system with all masses and spring constants equal to 1, \(\sqrt{M}=I\) and \(\sqrt{W}=I\), making \(B=\Phi_F\). 
	
	It is important to emphasize that \(B\) constitutes the block-off-diagonal structure of the Hamiltonian \(\mathcal{H}\) in ~\eqref{eq:CO_H}. Thus, any quantum simulation of \(e^{-\,i\mathcal{H}t}\) must efficiently encode \(B\) into a larger unitary, a task we accomplish using \emph{block encoding}. In the sections that follow, we outline how this approach accommodates both uniform and nonuniform parameter choices, as well as open versus periodic boundary conditions.

	\section{Block Encoding}
	\label{sec:BE}
	
	In the preceding section, we encountered matrices that are non-unitary and thus cannot be directly realized as quantum gates (which must be unitary). A useful technique for embedding arbitrary linear operators into larger unitaries is known as \emph{block encoding} \cite{Gily_n_2019}. Below, we formalize the concept of a block encoding and show how it supports subsequent circuit constructions for time-evolution operators $e^{-\,i\,\mathcal{H}\,t}$.
	
	\begin{definition}[Block Encoding]\label{def:BE}
		Let \(A\) be an operator acting on \(s\) qubits (i.e., an \(s\)-qubit matrix). An \((\alpha,a,\varepsilon)\)\emph{-block-encoding} of \(A\) is a \((s+a)\)-qubit unitary \(U_A\) that has the block structure
		\begin{equation} \label{eq:BE_mat}
			U_A = \mqty(A/\alpha 	& 	\ast \\ 
			\ast	&	\ast)
		\end{equation}
		such that
		\begin{equation} \label{defEq_BE}
			\norm{A - \alpha(\bra{0^a} \otimes I_{s}) U_A (\ket{0^a} \otimes I_{s})}_2 \leq \varepsilon \ ,
		\end{equation}
		where \(I_s\) is the identity on the \(s\) signal qubits, \(\ket{0^a}\equiv \ket{0}^{\otimes a}\) are \(a\) ancillary qubits, \(\alpha\ge 0\) is called the \emph{sub-normalization constant}, and \(\varepsilon\ge 0\) is the encoding error. The symbol \(\ast\) indicates unspecified blocks that make \(U_A\) unitary. 
	\end{definition}
	
	In essence, \(\ket{0^a}\) acts as a ``flag'' indicating when the operator \(A\) is successfully applied: measuring all ancillary qubits in \(\ket{0^a}\) collapses the circuit to the action of \(A\). A schematic of such a \((\alpha,a,\varepsilon)\)-block-encoding is shown in Fig.~\ref{fig:BE_circuit_schematic}. Note that the error of encoding and the sub-normalization constant must follow the condition \cite{Gily_n_2019}:
	\[
	\|A\|_2 \;\le\; \alpha \;+\;\varepsilon,
	\]
	in keeping with the block structure in ~\eqref{eq:BE_mat}. Once a block-encoding is available, subsequent composition techniques such as tensor products, multiplications, or linear combinations of these block-encoded operators (often referred to as the linear combination of unitaries or LCU) enable rich algebraic manipulations purely at the circuit level (see Appendix).
	
	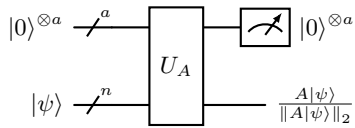
\begin{figure}[ht!]
		\centering
		\begin{quantikz} 
			\lstick{$\ket{0}^{\otimes a}$}	&\qwbundle{a}	&\gate[2]{U_A}		&\meter{} \rstick{$\ket{0}^{\otimes a}$} \\
			\lstick{$\ket{\psi}$}			&\qwbundle{n}	&					&\rstick{$\frac{A\ket{\psi}}{\norm{A\ket{\psi}}_2}$}
		\end{quantikz}
		\caption{{\bf A $(\alpha,a,\varepsilon)$-block-encoding of an $s$-qubit operator $A$.} The top $a$ qubits are ancillary (\emph{ancilla register}), initially in state \(\ket{0^a}\). The bottom $s$ qubits (\emph{signal register}) hold the state on which \(A\) acts. Measuring the ancillas in the state \(\ket{0^a}\) projects onto \(A/\alpha\).}
		\label{fig:BE_circuit_schematic}
	\end{figure}
	
	\subsection{Quantum Singular Value Transformation (QSVT)}
	\label{subsec:QSVT}
	
	Given a block encoding of a Hamiltonian \(\mathcal{H}\), one can construct more complicated operators such as \(e^{-\,i\mathcal{H}\tau}\) by applying \emph{quantum singular value transformation} (QSVT) \cite{Gily_n_2019}. QSVT generalizes techniques from \emph{quantum signal processing} (QSP) \cite{Low_2017} to allow polynomial transformations of singular values (or equivalently, eigenvalues for normal matrices) of the block-encoded operator \cite{Gily_n_2019,Martyn2021,Martyn2023}. 
	
	In particular, if \((\alpha,a,0)\)-block-encodings of \(\mathcal{H}\) are available, one can approximate \(\cos(\mathcal{H}\tau)\) and \(-\,i\,\sin(\mathcal{H}\tau)\) via polynomial expansions, then combine them (via an LCU trick) to form \(\exp(-i\mathcal{H}\tau)\). We often refer to QSVT over real-symmetric or Hermitian matrices as \emph{quantum eigenvalue transformation} (QET) \cite{Martyn2023}, but the term QSVT is used broadly.
	
	The foundation of QSVT lies in QSP, which transforms the real or imaginary part of a unitary matrix into a target polynomial \(P(\cdot)\) via a set of \emph{signal-processor} phase angles \(\vec{\phi} = \{\phi_0, \phi_1, \dots, \phi_d\}\). For instance, in a \(2\times 2\) setting,
	\begin{equation*} 
		U 
		\;=\; 
		\begin{pmatrix}
			a & * \\
			* & *
		\end{pmatrix}
		\xrightarrow[\text{QSP}]{\phantom{X}}
		U^{\vec{\phi}}
		\;=\;
		\begin{pmatrix}
			P(a) & * \\
			* & *
		\end{pmatrix},
	\end{equation*}
	where the angles \(\{\phi_j\}\) dictate how projector-controlled phase gates (or equivalently, \(Z\)-rotations) “process” the matrix entries to yield the desired polynomial transformation. In practice, these \(\phi_j\) can be computed \emph{offline}, for example, via \texttt{QSPPACK} in MATLAB \cite{Dong_2021,Wang_2022,dong2022infinite}. We then embed this QSP subroutine into a broader QSVT circuit by interleaving our signal-processor phase gates with the block-encodings of the target operator (e.g., a Hamiltonian \(\mathcal{H}\)), as described formally in Theorem~\ref{thm:QSVT}. In our work, we adopt and slightly modify the QSVT constructions from Refs.~\cite{lin2022lecture, ralli2023calculating}, to ensure that once the polynomial coefficients are fixed, we can assemble the final time-evolution circuit $e^{-i\mathcal{H}t}$ through a sequence of carefully designed phase gates and block-encoded unitaries.
	
	\begin{definition}[Projector-controlled phase gate]\label{def:QSVT_projConPhase}
		Let $\Pi$ be a projector onto a desired subspace, and let $\phi$ be a real angle. We define the projector-controlled phase-$\phi$ gate as
		\begin{equation}
			\label{eq:HS_projConPhase}
			\Pi(\phi) \;\coloneqq\; e^{\,i\,\phi \,\bigl(2\Pi \,-\,I\bigr)}.
		\end{equation}
		Equivalently, $\Pi(\phi)$ applies a phase $e^{\,i\phi}$ when the ancilla register lies in subspace $\Pi$, and $e^{-\,i\phi}$ otherwise. In the context of block-encodings, one often has $\Pi = \dyad{0^a}$ or a multi-qubit projector indicating $\ket{0^a}$.
	\end{definition}
	
	One may interpret $\Pi(\phi)$ as an operator that rotates by $+\,\phi$ on the projector subspace and by $-\,\phi$ on its orthogonal complement. Practically, one realizes $\Pi(\phi)$ with a $Z$-rotation gate $e^{-\,i\,\phi\,Z}$ (i.e., $R_z(2\phi)$) sandwiched between two $\mathrm{C}_{\Pi}\mathrm{NOT}$ gates, where these modified CNOTs activate only when the ancilla qubits match the projector $\Pi$. Figure~\ref{fig:QSVT_proj_gate} illustrates this construction.
	
	\begin{figure}[ht!]
		\centering
		\begin{quantikz}
			\lstick{$\ket{0}$}		&&\targ{}\wire[d][1]{q} &\gate{e^{-i\phi Z}}	&\targ{}\wire[d][1]{q} & \\
			\lstick{$\ket{\psi_0}$}	&\qwbundle{}&\gate{\Pi}				&					&\gate{\Pi} &
		\end{quantikz}
		\caption{{\bf Circuit for the projector-controlled phase gate $\Pi(\phi)$.}
			An $e^{-\,i\phi Z}$ rotation is flanked by two projector-controlled NOT gates $\mathrm{C}_{\Pi}\mathrm{NOT}$. 
			Multi-controlled gates arise naturally if $\Pi$ projects onto more than one qubit.}
		\label{fig:QSVT_proj_gate}
	\end{figure}
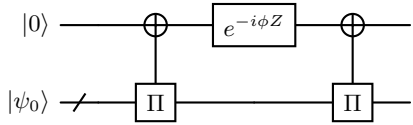
	
	\begin{definition}[C$_{\Pi}$NOT gate]\label{def:QSVT_CpiNOT}
		For the same projector $\Pi$, define
		\begin{equation}\label{eq:QSVT_CpiNOT}
			\mathrm{C}_{\Pi}\mathrm{NOT}
			\;\coloneqq\;
			X\,\otimes\,\Pi
			\;+\;
			I\,\otimes\bigl(I-\Pi\bigr).
		\end{equation}
		This operator flips (NOT) the first qubit precisely when the ancilla qubits lie in subspace $\Pi$. In a simple case where $\Pi = \dyad{00}$, $\mathrm{C}_{\Pi}\mathrm{NOT}$ reduces to a $\mathrm{CCNOT}$ gate. In general, one can realize $\mathrm{C}_{\Pi}\mathrm{NOT}$ as a multi-controlled NOT acting conditionally on the ancilla projector.
	\end{definition}
	
	\noindent
	Together, $\Pi(\phi)$ and $\mathrm{C}_{\Pi}\mathrm{NOT}$ provide a flexible mechanism for embedding phase shifts that depend on an ancilla subspace. In block-encoding applications, these gates enable polynomial transformations on the singular values or eigenvalues of the encoded operator, forming key elements of quantum signal processing and QSVT protocols, which we now explain.

	\begin{theorem}[Quantum Singular Value Transformation]\label{thm:QSVT}
		Let $A$ be an $s$-qubit operator with a $(\alpha,a,0)$--block-encoding $U_A$, meaning
		\[
		\left(\bra{0^a}\otimes I_s\right) \;U_A\;\left(\ket{0^a}\otimes I_s \right) \;\approx\;A/\alpha.
		\]
		Suppose $P(\cdot)$ is a target polynomial, defined on a suitable domain (e.g., $\|A\|_2\le 1$ so that $P(\|A\|_2)\le 1$). Then, there exists a polynomially sized set of phase angles $\{\phi_0,\dots,\phi_d\}$ such that one obtains a block-encoding of $P(A)$ denoted by $U_{A, \mathrm{QSVT}}^{\vec{\phi}}$ through the following construction:
		
		\begin{eqnarray}
			\label{eq:QSVTproc_comp}
			U_{A, \mathrm{QSVT}}^{\vec{\phi}}
			\;=\;
			\Bigl(H\;\otimes\;I_{a+s}\Bigr)
			\;U_A^{\vec{\varphi}}
			\;\Bigl(H\;\otimes\;I_{a+s}\Bigr),
		\end{eqnarray}
		where the core QSVT sequence $U_A^{\vec{\varphi}}$ alternates $U_A$ and $U_A^\dagger$ with projector-controlled phase gates $\Pi(\varphi_j)$ and possibly additional single-qubit rotations. Here, $H$ denote the Hadamard gate, not with the system Hamiltonian $\mathcal{H}$. Concretely,
		
		\begin{eqnarray}
			\label{eq:QET_core}
			U_A^{\vec{\varphi}} =
			\begin{cases}
				R_z^{f_e(d)}(2\pi)\,\Pi(\varphi_0)
				\displaystyle\prod_{j=1}^{d/2}
				\bigl[
				U_A^\dagger\,\Pi(\varphi_{2j-1})\,U_A\,\Pi(\varphi_{2j})
				\bigr]
				& \text{for even $d$,}\\[8pt]
				R_z^{f_o(d)}(2\pi)\,\Pi\bigl(-\varphi_0\bigr)\,U_A
				\displaystyle\prod_{j=1}^{(d-1)/2}
				\bigl[
				\Pi\bigl(-\varphi_{2j-1}\bigr)\,U_A^\dagger\,\Pi\bigl(-\varphi_{2j}\bigr)\,U_A
				\bigr]
				& \text{for odd $d$.}
			\end{cases}
		\end{eqnarray}
		
		\noindent
		Here, $d$ is the degree (or approximate degree) of $P(\cdot)$, and the functions $f_e(d)$, $f_o(d)$ specify whether an extra $R_z$ gate contributes a $\pm 1$ factor depending on $d \bmod 4$. Specifically,
		\begin{eqnarray} \label{eq:QET_f_e}
			f_e(d) =
			\begin{cases}
				0 &\text{if $d\bmod 4 = 0$,} \\
				1 &\text{otherwise,}
			\end{cases}
		\end{eqnarray}
		\begin{eqnarray} \label{eq:QET_f_o}
			f_o(d) =
			\begin{cases}
				0 &\text{if $d\bmod 4 = 1$,} \\
				1 &\text{otherwise,}
			\end{cases}
		\end{eqnarray}
		The phase angles $\varphi_j$ are offset from the original $\phi_j$ returned by, e.g., \texttt{QSPPACK}~\cite{Dong_2021,Wang_2022,dong2022infinite}, according to
		\begin{eqnarray}
			\label{eq:varphiTF}
			\varphi_j \;=\;
			\begin{cases}
				\phi_0 + \tfrac{\pi}{4}, & j=0,\\[4pt]
				\phi_j + \tfrac{\pi}{2}, & j=1,\dots,d-1,\\[4pt]
				\phi_d + \tfrac{\pi}{4}, & j=d.
			\end{cases}
		\end{eqnarray}
	\end{theorem}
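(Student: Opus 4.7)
The plan is to reduce the QSVT construction to the one-dimensional \emph{quantum signal processing} (QSP) theorem and then lift it back up via the singular value decomposition of $A$. Let $A = \sum_k \sigma_k\,\ket{u_k}\!\bra{v_k}$ with $\sigma_k\in[0,1]$. First I would show that the block encoding $U_A$, together with the projector $\Pi = \dyad{0^a}\otimes I_s$, exhibits a \emph{qubitized} structure: the two-dimensional subspaces $\mathcal{V}_k = \operatorname{span}\{\ket{0^a}\ket{v_k},\,U_A^\dagger\ket{0^a}\ket{u_k}\}$ together with their partners $\mathcal{U}_k$ are preserved (up to swaps) by $U_A$, $U_A^\dagger$ and the projector-controlled phases $\Pi(\varphi)$; in a suitable orthonormal basis of each subspace, $\Pi(\varphi)$ acts as $e^{i\varphi Z}$, while $U_A$ acts as the QSP ``signal'' rotation with parameter $\sigma_k$. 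This is the classical qubitization lemma and provides the crucial block-diagonal decomposition.

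Given this reduction, I would invoke the QSP theorem: for any real polynomial $P$ of degree $d$ with definite parity and $\|P\|_{[-1,1]}\le 1$, there exists a sequence of angles $\{\phi_0,\ldots,\phi_d\}$ such that the interleaved QSP product equals a $2\times 2$ unitary whose $(1,1)$-entry is $P(\sigma_k)$. Applying this subspace by subspace across all $k$ shows that the ``$A/\alpha$-block'' of the core sequence $U_A^{\vec{\varphi}}$ in \eqref{eq:QET_core} equals the singular value transformation $\sum_k P(\sigma_k)\,\ket{u_k}\!\bra{v_k}$. The parity constraint in QSP forces $P$ to be purely even when $d$ is even and purely odd when $d$ is odd, which is exactly why \eqref{eq:QET_core} splits into two distinct cases with different numbers of $U_A$/$U_A^\dagger$ factors and a different starting reflection.

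I would then account for the outer Hadamard wrapping $(H\otimes I_{a+s})\,U_A^{\vec{\varphi}}\,(H\otimes I_{a+s})$ in \eqref{eq:QSVTproc_comp}, which serves as a single-ancilla \emph{linear combination of unitaries} step: by superposing two QSP sequences whose central blocks realize $\operatorname{Re}P(A)/\alpha$ and $i\operatorname{Im}P(A)/\alpha$ respectively, the Hadamard conjugation coherently sums them so that the $(1,1)$ block of the overall unitary becomes the full complex polynomial $P(A)/\alpha$. This is the standard route to promote the real-polynomial QSP guarantee to a general-polynomial QSVT block-encoding.

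The main obstacle, and the step I would spend the most effort on, is the phase-offset bookkeeping in \eqref{eq:varphiTF} together with the parity-dependent $R_z^{f_e(d)}(2\pi)$ or $R_z^{f_o(d)}(2\pi)$ correction. The offsets $\phi_0\mapsto\phi_0+\pi/4$, $\phi_j\mapsto\phi_j+\pi/2$, $\phi_d\mapsto\phi_d+\pi/4$ reconcile the ``Wx'' reflection-style convention returned by \texttt{QSPPACK} with the projector-controlled phase convention fixed in Definition~\ref{def:QSVT_projConPhase}, while the trailing $R_z^{f_{e/o}(d)}(2\pi)$ factor of $\pm I$ absorbs the residual global phase that accumulates from commuting the $\pi/2$ interior shifts through the alternating product. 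I would verify these corrections by writing out the $2\times 2$ invariant-subspace action of both the standard QSP sequence and the sequence in \eqref{eq:QET_core} in each of the four parity classes $d\bmod 4\in\{0,1,2,3\}$ and checking that they agree on the top-left block up to the stated overall factor, so that the final sub-normalization $\alpha$ is preserved exactly.
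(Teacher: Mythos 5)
Your proposal follows essentially the same route as the paper, which itself only sketches the argument and defers the full derivation to Refs.~\cite{Dong_2021,Martyn2021}: qubitization into $2\times 2$ invariant subspaces, the QSP existence theorem applied singular value by singular value, the outer Hadamard conjugation as an LCU/real-part step, and the $\pi/4$, $\pi/2$ offsets as a conversion between the \texttt{QSPPACK} convention and the projector-controlled-phase convention. Your outline is in fact more detailed than the paper's proof sketch and correctly identifies the parity-dependent phase bookkeeping as the only step requiring genuine verification.
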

	
	\begin{proof}[Sketch of Proof]
		A complete derivation can be found in Ref.~\cite{Dong_2021} (see also \cite{Martyn2021}), but we outline the key idea. In quantum signal processing (QSP), one constructs polynomial transformations of a unitary’s singular (or eigen) values by interleaving projector-controlled phase gates with the block-encoded operator. Our variant modifies the original QSP/QSVT sequence slightly to accommodate sine-function approximations and ensure correct phase alignment. This modification shifts $\phi_0$ and $\phi_d$ by $\tfrac{\pi}{4}$ and each intermediate $\phi_j$ by $\tfrac{\pi}{2}$, so that the resulting sequence indeed implements $P(A)$ up to the desired precision.
	\end{proof}
	
	\paragraph*{Remarks on Constraints.}
	For $U_{A, \mathrm{QSVT}}^{\vec{\phi}}$ to accurately represent $P(A)$:
	\begin{enumerate}[label=(\roman*)]
		\item \emph{Hermiticity.} Typically, $A$ should be Hermitian (or real-symmetric) so that the singular values match its eigenvalues; see \cite{Martyn2021} for details.
		\item \emph{Bounded Domain.} One often restricts $\|A\|_2\le 1$, ensuring that $P(\cdot)$ remains well-defined and $\|P(\cdot)\|_2\le 1$.  
		\item \emph{Polynomial Parity.} For real polynomials, $P(\cdot)$ generally must be even or odd to fit the QSVT subroutine neatly.  
		\item \emph{Amplitude Amplification.} Measuring the ancilla qubits in $\ket{0^a}$ yields the transformed state (or matrix) $P(A)$. Since this measurement may not always succeed with high probability, amplitude amplification or robust oblivious amplitude amplification (ROAA) \cite{Berry_Childs_Cleve_Kothari_Somma_2014} is often employed to boost success probability.
	\end{enumerate}
	
	As a final implementation note, the circuit in Fig.~\ref{fig:QET_mod} (a simplified depiction of QSVT for a Hermitian operator) shows how the block-encoded $A$ is repeatedly interspersed with phases $\Pi(\varphi_j)$. The precise angle set $\{\phi_j\}$ can be computed offline based on the target polynomial $P(\cdot)$ (e.g., using Chebyshev expansions or numerical solvers such as \texttt{QSPPACK}), and the final gate sequence is then assembled into a coherent QSVT transformation.
	
	\begin{figure}[ht!]
		\centering
		\begin{adjustbox}{width=0.9\textwidth}
			\begin{quantikz}
				\lstick{$\ket{0}$}	&\gate{H} && \gate[2]{\Pi(\varphi_d)} & & \gate[2]{\Pi(\varphi_{d-1})} & & \midstick[3, brackets=none]{$\cdots$} & &\gate[2]{\Pi(\varphi_0)} &&\gate{H}&\\
				\lstick{$\ket{0}^{\otimes a}$}&&\qwbundle{a} & &\gate[2]{U_A} & &\gate[2]{U_A^\dagger} & &\gate[2]{U_A} & & &&\\
				\lstick{$\ket{\psi}$}&	&\qwbundle{s} & & & & & & & & &&\\	
			\end{quantikz}
		\end{adjustbox}
		\caption{{\bf A schematic quantum eigenvalue transformation circuit of $s$-qubit operator $A$ with its $(\alpha_A, a, 0)$--block-encoding $U_A$.} By interleaving \(U_A\) and \(U_A^\dagger\) with 
			projector-controlled phase gates, one effectively applies a polynomial \(P(\lambda)\) to 
			the eigenvalues \(\lambda\) of \(A\). Measuring the ancillas in \(\ket{0^a}\) then 
			projects onto the subspace corresponding to \(P(A)\). In standard QSVT, non-Hermitian 
			operators require singular-value transformations, but for Hermitian \(A\), we can speak 
			directly of eigenvalue transformations (QET). Note that $H$ here represents the Hadamard gate.} 
		\label{fig:QET_mod}
	\end{figure}
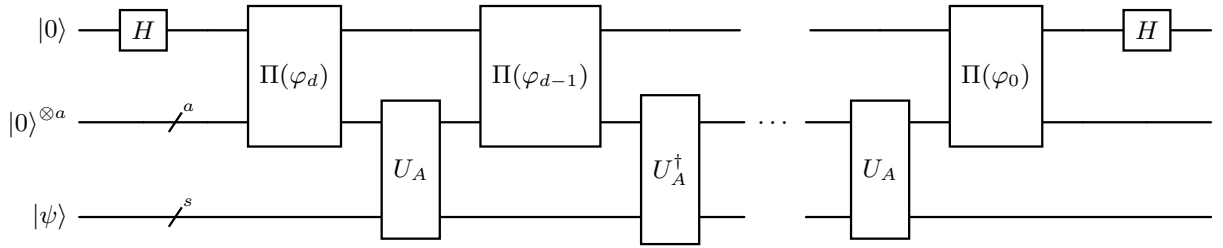

	\subsection{Robust Oblivious Amplitude Amplification (ROAA)}
	\label{subsec:ROAA}
	
	Amplitude amplification is essential to ensure that measuring the ancilla qubits in \(\ket{0^a}\) occurs with high probability, thus extracting the correctly transformed state from a block-encoded operator. While the idea of amplitude amplification dates back to Grover’s pioneering search algorithm \cite{grover1996fast}, standard approaches can be cumbersome when dealing with block encodings and ancillary subspaces. In these cases, \emph{robust oblivious amplitude amplification} (ROAA) \cite{Berry_Childs_Cleve_Kothari_Somma_2014} is frequently preferred as it requires fewer assumptions about re-preparing states and does not incur significant overhead from repeated projective measurements. Indeed, unlike standard amplitude amplification which often involves explicit measurement or reinitialization of the ancillas, ROAA provides an “oblivious” mechanism that preserves the block-encoded structure throughout the amplification steps, minimizing disruption to the overall circuit while still boosting success probabilities to near unity \cite{Martyn2021,KothariThesis}. 
	
	\begin{figure*}[ht!]
		\centering
		\subfloat[]
		{
			\begin{adjustbox}{width=0.2\textwidth}
				\begin{quantikz}
					\lstick[5]{$x$}& \gate{X} & \ctrl{2}	& \gate{X}	& \\
					& \gate{X} & \control{}	& \gate{X}	& \\
					& \setwiretype{n} \vdots & \vdots     & \vdots    & \\
					& \setwiretype{n}  &      &     & \\
					& \gate{X} & \gate{Z}\wire[u][1]{q} & \gate{X} &
				\end{quantikz}
			\end{adjustbox}
		}
		\subfloat[]
		{
			\begin{adjustbox}{width=0.80\textwidth}
				\begin{quantikz}[slice style = blue]
					\lstick{$\ket{0}^{\otimes a}$} &\qwbundle{a} & \slice{$\ket{0}^{\otimes a}\ket{v}$} &\gate[2]{U_A} 	& \gate{R_{\psi_g} = R_0(a)}\gategroup[2, steps=6, style={dashed, rounded corners, fill = blue!20, inner xsep = 2pt}, background, label style={label position = above, anchor = north, yshift=0.4cm}]{\textcolor{blue}{$W$}} 	&\gate[2]{U_A^\dagger} 	& \gategroup[2, steps=3, style={dashed, rounded corners, fill = red!20, inner sep = 1pt}, background, label style={label position = above, anchor = north, yshift=0.43cm}]{\textcolor{red}{$R_{\psi_0}$}}						& \gate[2]{R_{0}(a+n)} & 			& \gate[2]{U_A} & \meter{} \rstick[brackets=none]{$\ket{0}^{\otimes a}$} \\
					\lstick{$\ket{0}^{\otimes n}$} &\qwbundle{n} & \gate{S_n} 	& 				& 								& 						& \gate{S_n^\dagger} 	& 			 		&\gate{S_n} & 				& \rstick[brackets=none]{$\frac{\tilde{A} \ket{v}}{\norm{\tilde{A} \ket{v}}_2}$}
				\end{quantikz}
			\end{adjustbox}
		}
		\caption{(a) Circuit realizations of a reflection operator  $R_0(m)$ and the Grover iteration operator $W$. The reflection operator $R_0(m)$ applies a phase of $-1$ only if the input is $\ket{0^x}$. It decomposes into $X$ and multi-controlled $Z$ gates. (b)  The Grover iteration operator $W$ from Definition~\ref{def:AA_W_iteration} uses two reflections, $R_{\psi_0}$ and $R_{\psi_g}$, along with the block-encoding $U_A$ and its inverse $U_A^\dagger$. Iterating $W$ amplifies the amplitude on $\ket{0^a}\otimes \tilde{A}\ket{v}$.} 
		\label{fig:AA_amplitude}
	\end{figure*}
	
	\vspace{3pt}
	\paragraph*{Amplitude amplification in a block-encoding.} Suppose
	\[
	\ket{\Psi_0}
	\;=\;
	\bigl(I_a\otimes S_n\bigr)
	\,\bigl(\ket{0^a}\otimes\ket{0^n}\bigr),
	\]
	where $S_n$ is a state-preparation unitary mapping $\ket{0^n}\mapsto \ket{v}$. Applying a \((\alpha,a,0)\)-block-encoding $U_A$ of $A$ onto $\ket{\Psi_0}$ yields (up to normalization)
	\begin{equation}\label{eq:AA_BE}
		U_A\,\ket{\Psi_0}
		\;=\;
		\ket{0^a}\otimes \tilde{A}\ket{v}
		\;\;+\;\;
		\sqrt{1-\norm{\tilde{A}\ket{v}}_2^2}\,\ket{\Psi_{\text{bad}}},
	\end{equation}
	where \(\tilde{A}:=A/\alpha\). Observing that projecting the ancillas onto $\ket{0^a}$ collapses the state to $\tilde{A}\ket{v}$, one can interpret $\ket{0^a}\otimes\tilde{A}\ket{v}$ as the ``good'' component and the orthogonal remainder as ``bad.’’  
	
	\begin{definition}[Grover Iteration]\label{def:AA_W_iteration}
		Define the Grover iteration operator
		\begin{equation}\label{eq:AA_W}
			W
			\;\coloneqq\;
			-\,U_A
			\;R_{\Psi_0}\;
			U_A^\dagger
			\;R_{\Psi_g},
		\end{equation}
		where
		\begin{align}
			R_{\Psi_0}&\;=\;I-\;2\,\dyad{\Psi_0}{\Psi_0}
			\;=\;(I_a \otimes S_n)\,R_0(a+n)\,(I_a \otimes S_n^\dagger),\\[6pt]
			R_{\Psi_g}&\;=\;I-\;2\,\dyad{\Psi_g}{\Psi_g}
			\;=\;I_{a+n}-\,2\,\bigl(\dyad{0^a}{0^a}\otimes I_n\bigr).
		\end{align}
		Here, a reflection operator $R_0(m)\coloneqq I_m-2\,\dyad{0^m}{0^m}$ flips the sign only of $\ket{0^m}$, and $\ket{\Psi_g}$ denotes the ``good’’ subspace $\ket{0^a}\otimes \tilde{A}\ket{v}$ in this context.
	\end{definition}
	
	The circuit realization of the reflection operator $R_0(x)$ and of the Grover iteration operator $W$ are detailed in Fig.~\ref{fig:AA_amplitude}. Next, iterating on $W$ systematically amplifies the amplitude on $\ket{0^a}\otimes \tilde{A}\ket{v}$, driving the success probability toward near unity. Specifically, while the success probability of measuring \(\ket{0^a}\) on the ancilla wire before amplification is \(p = \|\tilde{A}\ket{v}\|_2^2\), this probability can be boosted close to one by applying \(W\) \(\mathcal{O}(1/\sqrt{p})\) times (see \cite{Berry_Childs_Cleve_Kothari_Somma_2014,Martyn2023}).
	
	Thus, robust oblivious amplitude amplification (ROAA) provides the final piece for reliably extracting the “good” subspace after a polynomial transformation via QSVT. In practice, once the block-encoding $U_A$ for $A$ is constructed, we append amplitude amplification steps until the ancilla-measurement probability surpasses a desired threshold. Together, these block-encoding, QSVT, and amplitude amplification routines form the backbone of our quantum algorithm for simulating coupled oscillators.

	
	\section{Theoretical Results}
	\label{sec:result_theory}
	
	In this section, we present our main results on the explicit quantum circuit for simulating coupled oscillator chains. Our systems of interest include both periodic (\emph{closed-chained}) and open (\emph{open-chained}) boundary conditions, with masses and spring constants that may be either uniform or non-uniform. We begin by constructing a block encoding circuit of the key matrix \(B\) which crucially encodes the mass-spring network coupling structure (see ~\eqref{eq:CO_weightB}). From this, we assemble a block encoding of the Hamiltonian \(H\) ~\eqref{eq:CO_H}, using the previously obtained block-encoded \(B\). Finally, by composing QSVT-based circuits (implementing polynomials like \(\cos(\mathcal{H}t)\) and \(-\,i\,\sin(\mathcal{H}t)\)) via linear combination of unitaries (LCU), we realize a block encoding circuit of the time-evolution operator \(e^{-\,i\, \mathcal{H}\, t}\). To amplify the probability of measuring the ancilla qubits in the correct (projective) outcome, we apply robust oblivious amplitude amplification, yielding a high-fidelity target time-evolved state  at the circuit’s output.
	
	\subsection{Block Encoding Circuit of \boldmath\(B\)}
	\label{subsec:BE_B}
	
	\subsubsection{Uniform Coupled Oscillators with \boldmath\(N = 2^n\) Oscillators}
	\label{sss:BE_B_unit}
	
	We first consider a simple scenario in which each oscillator is connected only to its nearest neighbor, and all masses and spring constants are set to 1 (\(m_i = k_{(i,i+1)}=1\)). Let \(N=2^n\) be the total number of oscillators, with \(n \in \mathbb{N}\). Depending on the boundary conditions, we introduce
	\begin{equation}
		\label{eq:A_equiMK}
		\mathcal{A}
		\;=\;
		\begin{pmatrix}
			\mathfrak{C}+1 & -1 & 0 & \cdots & 0 & -\,\mathfrak{C}\\
			-1 & 2 & -1 &  \ddots & 0 & 0 \\
			0 & -1 & 2 &  \ddots & \vdots & 0 \\
			\vdots &  \ddots &  \ddots & \ddots & -1 & \vdots \\
			0 & 0 & \hdots & -1 & 2 & -1 \\
			-\mathfrak{C} & 0 & 0 & \hdots & -1 & \mathfrak{C}+1
		\end{pmatrix}_{\!N\times N}.
	\end{equation}
	where
	\begin{equation}
		\label{eq:c_withCyC}
		\mathfrak{C} \;=\;
		\begin{cases}
			0, & \text{open-chained},\\
			1, & \text{closed-chained}.
		\end{cases}
	\end{equation}
	One easily verifies that \(\mathcal{A}\) is a Laplacian matrix that can be factorized as \(\bar{B}\,\bar{B}^\dagger\), where \(\bar{B}\) is derived from $B$ in ~\eqref{eq:CO_weightB} but specialized to unit masses and springs (we add “\(\bar{\phantom{B}}\)” to denote uniform systems). Because \(\sqrt{M} = I\) and \(\sqrt{W}=I\) in this uniform setting, one obtains 
	\begin{equation}
		\label{eq:B_equiMK}
		\bar{B}
		\;=\;
		\begin{pmatrix}
			1 & 0 & 0 & \cdots & -\,\mathfrak{C}\\
			-1 & 1 & 0 & \cdots & 0\\
			0 & -1 & 1 & \cdots & \vdots\\
			\vdots & \vdots & \vdots & \ddots & 0\\
			0 & 0 & 0 & \cdots & \mathfrak{C}
		\end{pmatrix}_{N\times N}.
	\end{equation}
	We now construct block encodings of \(\bar{B}\) and analyze their gate complexities in both boundary conditions.
	
	\paragraph*{1.1 Closed-chained case.}
	Denote by \(\bar{B}_c(N)\) the closed-chain matrix \(\bar{B}\) for \(N\) oscillators. Referring to ~\eqref{eq:B_equiMK}, its nonzero elements appear only on the diagonal and certain off-diagonal positions, so we can decompose
	\begin{equation}
		\label{eq:decom_cy_B}
		\bar{B}_c(N)
		\;=\;
		I_N \;-\; L_N,
	\end{equation}
	where the \emph{L}-shift matrix \(L_N\) is defined as
	\[
	L_N
	\;=\;
	\begin{pmatrix}
		0 & 0 & \cdots & 0 & 1\\
		1 & 0 & \cdots & 0 & 0\\
		0 & 1 & \cdots & \vdots & \vdots\\
		\vdots & \vdots & \ddots & 0 & 0\\
		0 & 0 & \cdots & 1 & 0
	\end{pmatrix}_{N\times N}.
	\]
	In block-encoding language, \(L_N\) admits a trivial \((1,0,0)\)-block-encoding, so it requires no additional ancillas. A quantum circuit for the \emph{L}-shift matrix \(L_{2^n}\) can be built via a series of multi-controlled NOT gates (C$^n$NOTs). Explicitly,
	\[
	L_{2^n}
	\;=\;
	\prod_{j=0}^{n-1}
	\mathrm{C}^{n-j-1}\mathrm{NOT}_{\{m \,\mid\,m>j+1,m\in[n]\}}^{\,j+1},
	\]
	where each term is a multiple-controlled NOT with $n-j$ control node(s) acting on a set of controlled qubit(s) $\{m|m>j,m\in [n]\}$ and targeting the $j$th qubit. For $j = n-1$, the term becomes a NOT gate. A \emph{$L$}-shift gate $L_{2^n}$ is simply made up of multi-controlled NOT gate where the target qubit starts from the topmost qubit and decreases by one for each gate. Figure~\ref{circ:L_n} shows an explicit circuit of $L_{2^n}$.
	
	\begin{figure}[h]
		\centering
		\begin{quantikz}
			\lstick[4]{$n$}& \targ{}					& 						& \midstick[2, brackets=none]{$\cdots$}		&					&  			&\\
			& \ctrl{1} \wire[u][1]{q}	& \targ{} \wire[d][1]{q}&											& 					&  			&\\
			&\setwiretype{n}  	& 				& \vdots									& 			&    & \\ 	 
			& \control{} \wire[u][1]{q}	& \ctrl{-1}				& 											&					& \gate{X}  &
		\end{quantikz}
		\caption{A quantum circuit for implementing the \emph{L}-shift operator $L_{2^n}$. It consists of a sequence of multi-controlled NOT gates plus one $X$ gate.}
		\label{circ:L_n}
	\end{figure}
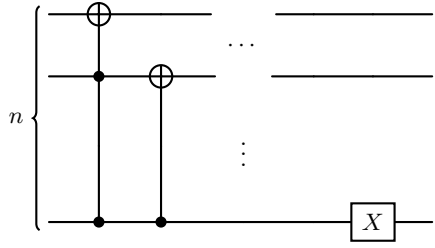
	
	By linear combination of block-encoded matrices (LCU), one obtains a block encoding of $\bar{B}_c(N)$ using the decomposition ~\eqref{eq:decom_cy_B}.
	The resulting circuit is
	\begin{equation}
		U_{\bar{B}_{c}(N)}
		\;=\;
		\bigl(XH \otimes I_N\bigr)
		\;\bigl(C L_N\bigr)
		\;\bigl(H \otimes I_N\bigr),
	\end{equation}
	with $C L_N$ denoting a controlled-$L_N$ gate. Figure~\ref{circ:notgen_BE_B}(a) illustrates this circuit. The total qubit count for building the block encoding of \(\bar{B}_c(N)\) is \(1 + \log_2 N\).
	
	\paragraph*{Gate complexity of \boldmath\(\overline{B}_c\).}
	Decomposing multi-controlled gates into single- and two-qubit elementary gates \cite{Barenco_1995} reveals that \(\bar{B}_c(N)\) has gate complexity scaling as
	\begin{equation}
		\label{eq:gateCom_Bcy}
		\bar{G}_c(N)
		\;=\;
		\mathcal{O}\bigl(n^2\bigr)
		\;=\;
		\mathcal{O}\bigl(\log_2^2 N\bigr).
	\end{equation}
	Table~\ref{tab1} lists the types and counts of gates.
	
	\begin{table}[h]
		\centering
		\caption{{\bf Gate counts in constructing the block encoding of \(\bar{B}_c(N)\).} Here, the V gate is the square root of the Pauli-\(X\) gate, which can be directly implemented in Qiskit as the \(\texttt{sx}\) instruction.}
		\label{tab1}
		\setlength{\tabcolsep}{3pt}
		\begin{tabular}{|c|c|}
			\hline
			\textbf{Gate type} & \textbf{Number of calls}\\
			\hline
			$H$	& 2 \\
			\hline
			$X$ & 1\\
			\hline
			CNOT & $31 \;+\; 8 \cdot \displaystyle \sum_{i=5}^{n} 4(i-2) \sim \mathcal{O}(n^2)$\\
			\hline
			CV	& 22\\
			\hline
			V & $8 \;+\; 8 \cdot \displaystyle \sum_{i=5}^{n} 4(i-2) \sim \mathcal{O}(n^2)$\\
			\hline
		\end{tabular}
	\end{table}
	
	\paragraph*{1.2 Open-chained case.}
	For the open-chained scenario, the matrix $\bar{B}$ has zero in its last column. Denote this matrix as $\bar{B}_o(N)$; it splits into
	
	\begin{align*} 
		\bar{B}_o(N)  &\coloneq I'_N - L'_N \\
		&= \mqty(1 & 0 & \cdots & 0 & 0 \\ 0 & 1 & \cdots & 0 & 0 \\ \vdots& \vdots& \ddots & \vdots& \vdots \\ 0& 0& \cdots & 1 & 0\\ 0 & 0& \cdots &0 &0)_{N \times N} - \mqty(0 & 0 & \cdots & 0 & 0 \\ 1 & 0 & \cdots & 0 & 0 \\ 0 & 1 & \cdots & \vdots & \vdots \\ \vdots &\vdots &\ddots & 0 & 0\\ 0 & 0 & \cdots & 1 & 0)_{N \times N},  \numberthis \label{eq:decom_ncy_B}
	\end{align*}
	
	\noindent where the \emph{prime} notation in $I'_N$ and $L'_N$ indicates that the last column 
	of the original $I_N$ and $L_N$ is set to zero. Both $I'_N$ and $L'_N$ each have straightforward \((1,1,0)\)-block-encodings, as shown in Fig.~\ref{circ:ncy_A01}. Combining them by LCU yields a \((2,2,0)\)-block-encoding of \(\bar{B}_o(N)\): 
	\begin{align}
		U_{\bar{B}_o(N)} & = (X\otimes I_{2N})(H \otimes I_{2N})(CL_{2N})\cdots \\
		&\cdots(X\otimes I_{2N})(\mathrm{C}^N \mathrm{NOT}^1_{m\in[\log_2 N]})(X\otimes I_{2N})(H\otimes I_{2N}).\nonumber
	\end{align}
	The circuit is depicted in Fig.~\ref{circ:notgen_BE_B}(b). 
	
	\begin{figure}[!h]
		\centering
		\subfloat[]{
			\begin{adjustbox}{width=0.195\textwidth}
				\begin{quantikz}
					\lstick{$\ket{0}$} 		& \gate[4]{U_{I'_N}} 	& \midstick[4, brackets=none]{=} 	& \targ{} \wire[d][2]{q}	&\\
					\lstick[3]{$n$}			& 					&									& \control{}				&\\
					\setwiretype{n}			&					& 									& \vdots						&\\
					&					&									& \control{} \wire[u][1]{q}	&
				\end{quantikz}
			\end{adjustbox}
		}
		\subfloat[]{
			\begin{adjustbox}{width=0.27\textwidth}
				\begin{quantikz}
					\lstick{$\ket{0}$} 		& \gate[4]{U_{L'_N}} 	& \midstick[4, brackets=none]{=} 	&\gate[4]{L_{2N}}		&\\
					\lstick[3]{$n$}			& 					&									&						&\\
					\setwiretype{n}			&					& 									&						&\\
					&					&									&						&
				\end{quantikz}
			\end{adjustbox}
		}
		\caption{(a)  A \((1,1,0)\)-block-encoding of $I'_N$, realized by a multi-controlled NOT gate. 
			(b) A \((1,1,0)\)-block-encoding of $L'_N$, realized by an $L$-shift gate on one extra qubit.} 
		\label{circ:ncy_A01}
	\end{figure}
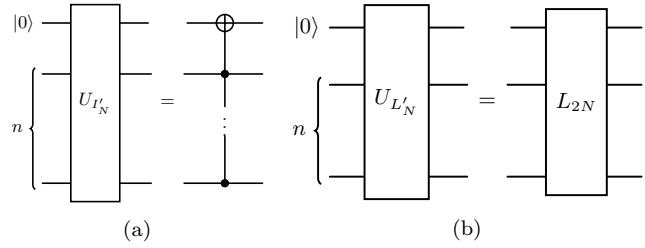
	
	\paragraph*{Gate complexity of \boldmath\(\overline{B}_o(N)\).}
	Similar to the closed-chained case, the open-chain construction adds one extra ancilla qubit 
	and a multi-controlled NOT gate. Concretely, for $\bar{B}_o(N)$, the circuit involves 
	2\,C$^{n+1}$NOT gates, 2\,Pauli-$X$ gates, and the base $\bar{B}_{o}(N)$ block encoding. 
	Since the complexity for closed-chained \(\bar{B}_c(N)\) is 
	$\mathcal{O}(\log_2^2 N)$, this open-chained variant remains at the same order: 
	\begin{equation}
		\bar{G}_c(N) = \mathcal{O}(\log_2^2 N).
	\end{equation}

	\paragraph*{Conclusion for uniform \boldmath\(\overline{B}(N)\).}
	In both boundary conditions, we have explicit block-encoding circuits of $\bar{B}(N)$ with gate complexity $\mathcal{O}\bigl(\log_2^2 N\bigr)$. Next, we will utilize these block encoding circuits to obtain the block encodings of the more general matrix $B$.
	
	\begin{figure*}[h!]
		\centering
		\subfloat[]{%
			\begin{adjustbox}{width=2.0in}
				\begin{quantikz}
					\lstick{$\ket{0}$} &		&\gate{H}		& \ctrl{1}	&\gate{H}	&\gate{X} &\\
					& \qwbundle{n}&			& \gate{L_N}	&		&&
				\end{quantikz}
			\end{adjustbox}
		}
		\quad
		\subfloat[]{%
			\begin{adjustbox}{width=3.75in}
				\begin{quantikz}
					\lstick{$\ket{0}$}	&					&\gate{H}		& \ctrl[open]{1}	&\ctrl{1}	&\gate{H}	&\gate{X} &\\
					& \qwbundle{n+1}	&				& \gate{U_{I'_N}}		&\gate{U_{L'_N}}	&			&			&
				\end{quantikz}
				=
				\begin{quantikz}
					\lstick[2]{$\ket{0}^{\otimes 2}$} 			&\gate{H}			& \ctrl[open]{1}		&\ctrl{1}			&\gate{H}		&\gate{X} 		&\\
					&					&\targ{} \wire[d][2]{q}\gategroup[4, steps=1, style={dashed, rounded corners, fill = blue!10, inner xsep = 2pt}, background, label style={label position = below, anchor = north, yshift=-0.4cm}]{$U_{I'_N}$}	&\gate[4]{L_{2N}}\wire[d][2]{q}\gategroup[4, steps=1, style={dashed, rounded corners, fill = blue!10, inner xsep = 2pt}, background, label style={label position = below, anchor = north, yshift=-0.4cm}]{$U_{L'_N}$} 	&			&				&\\
					\lstick[3]{$n$}								&					&\control{}				&					&				&				&\\
					\setwiretype{n}								&					&\vdots					&					&				&				&\\
					&					&\control{} \wire[u][1]{q}&					&				&				&
				\end{quantikz}
			\end{adjustbox}
		}
		\caption{(a) A \((2,1,0)\)-block-encoding of the closed-chained operator \(B_c(N)\), needing $\log_2 N + 1$ qubits; its gate count scales as $\mathcal{O}(\log_2^2 N)$. 
			(b) A \((2,2,0)\)-block-encoding of the open-chained operator \(B_o(N)\), needing $\log_2 N + 2$ qubits and also $\mathcal{O}(\log^2 N)$ gates.} 
		\label{circ:notgen_BE_B}
	\end{figure*}
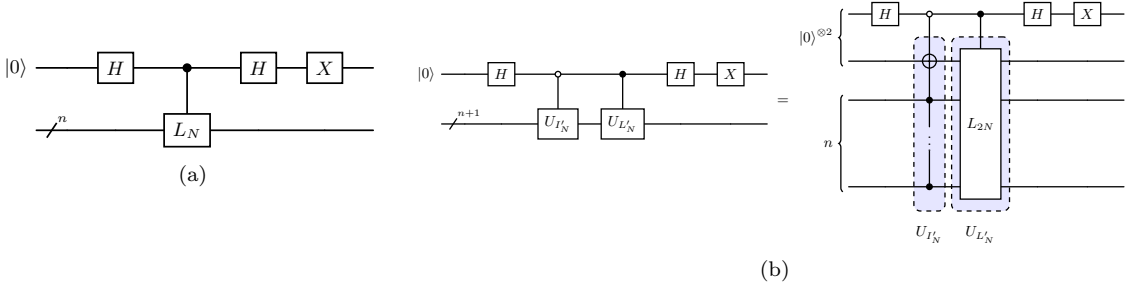
	
	\subsubsection{Generalized Coupled Oscillators with \boldmath\(N=2^n\)}
	\label{sss:genMK_CCO}
	
	We now move beyond the uniform-mass-and-spring case, allowing each mass \(m_j\) and spring constant \(k_{(j,j\pm1)}\) to vary throughout the chain. However, we still assume \(N=2^n\) for simplicity, so that index registers naturally fit a \(\log_2 N\)-qubit format. As before, we concentrate on one-dimensional chains with either open or periodic boundary conditions (cf.\ \eqref{eq:c_withCyC}). 
	
	\paragraph*{Circuit construction for block-encoding of \boldmath\(B\).}
	Recall from ~\eqref{eq:CO_weightB} that the matrix \(B\) for the general system is
	\[
	B
	\;=\;
	\sqrt{M}^{-1}\,\Phi_F\,\sqrt{W},
	\]
	where \(\Phi_F\) is the incidence matrix for the graph of the spring network, \(W\) is a diagonal matrix of spring constants~\eqref{eq: W_matrix}, and \(M\) is the diagonal mass matrix~\eqref{eq:mass_matrix}. For each of these diagonal operators, one constructs a \(\bigl(1,1,0\bigr)\)-block-encoding by applying bitstring-controlled $R_y$ rotations. Specifically, if
	\[
	D
	\;=\;
	\mathrm{diag}\bigl(d_0,d_1,\dots,d_{N-1}\bigr),
	\]
	then
	\begin{equation}
		U_D
		\;=\;
		\prod_{i=0}^{N-1}
		\Bigl[
		R_y\bigl(\theta_i\bigr)\otimes \dyad{i}{i}
		\Bigr],
		\quad
		\theta_i
		\;=\;
		2\,\arccos \bigl(d_i\bigr).
	\end{equation}
	Note that each $d_i$ must lie within $[-1,1]$ to avoid complex angles. Hence, if we wish 
	to set $d_i = \sqrt{k_i}$ or $d_i = \sqrt{m_i}^{-1}$, one typically enforces
	\[
	\sqrt{k_{i,i \pm 1}} \;\le\; 1 
	\quad\text{and}\quad 
	\sqrt{m_i}^{-1} \;\le\; 1,
	\]
	which is equivalent to $0 \le k_{i, i \pm 1} \le 1$ and $m_i \ge 1$. 
	As physical systems may have spring constants or masses outside these 
	ranges. One addresses this by \emph{rescaling} all $\{k_{i,i\pm 1}\}$ by $k_{\max}$  and all $\{m_i\}$ by $1/m_{\min}$ (or another suitable factor), 
	bringing each diagonal entry into $[-1,1]$. 
	However, such rescaling increases the overall 
	sub-normalization constant in the block-encoded operator, potentially introducing 
	additional overhead in amplitude amplification or query complexity. Consequently, 
	while not mandatory \emph{ab initio}, the condition $k_{i,i \pm 1} \in (0,1]$ and 
	$m_i \in [1,\infty)$ can be obtained after an appropriate rescaling 
	to ensure real-valued $R_y$ rotations in block encodings.
	
	Next, recall that in the uniform case where each mass and spring constants equals 1, the 
	incidence matrix $\Phi_F$ coincides with the matrix $\bar{B}$. Consequently, the 
	same block-encoding circuits used for $\bar{B}_c$ and $\bar{B}_o$ 
	(cf.\ Fig.~\ref{circ:notgen_BE_B}) can be adapted here to encode the corresponding $\Phi_F^c$ and $\Phi_F^o$ respectively. Finally, we form the product
	$B_{c/o}
	\;=\;
	\sqrt{M}^{-1}\;\Phi_F^{c/o}\;\sqrt{W}
	$
	by composing each block-encoded factor via the standard block-encoding composition rules, which provide a new block-encoding of the product (Appendix \ref{appSSec:BE_prod}). 
	
	\begin{figure*}[hpt!]
		\centering
		\subfloat[\label{circ:BE_diag}]{%
			\begin{adjustbox}{width=3.25in}
				\begin{quantikz}
					\lstick{$\ket{0}$}		&\gate{R_y(\theta_0)} \wire[d][1]{q}		&\gate{R_y(\theta_1)} \wire[d][1]{q} 	&\gate{R_y(\theta_2)} \wire[d][1]{q} & &\midstick[wires=5,brackets=none]{$\cdots$} &\gate{R_y(\theta_{N-1})} \wire[d][1]{q}& \\
					\lstick[4]{$n$}	&\ctrl[open]{2}								&\ctrl[open]{2}			&\ctrl[open]{2}	 & & &\ctrl{2} &\\
					& \setwiretype{n}  				&							& 		&\midstick[brackets=none]{$\vdots$} & & &\\
					&\ctrl[open]{1}							&\ctrl[open]{1}				&\ctrl{1} 		& & &\ctrl{1} &\\
					&\control[open]{}						&\control{}					&\control[open]{} & & &\control{} &
				\end{quantikz}
			\end{adjustbox}
		}
		\hfill
		\subfloat[\label{circ:BE_genB_cy}]{%
			\begin{adjustbox}{width=1.85in}
				\begin{quantikz}
					\lstick{$\ket{0}_{\sqrt{M}^{-1}}$}	&					&&&				& 				&\gate[4]{U_{\sqrt{M}^{-1}}}	&\\
					\lstick{$\ket{0}_{\Phi_F}$}	&					&&&				&\gate[3]{U_{\Phi_F}}	&				&\\
					\lstick{$\ket{0}_{\sqrt{W}}$}	&					&&&\gate[2]{U_{\sqrt{W}}}	&				&				&\\
					&\qwbundle{n}&&&				&				&				&\\	
				\end{quantikz}
			\end{adjustbox}
		}
		\hfill
		\subfloat[\label{circ:BE_genB_ncy}]{%
			\begin{adjustbox}{width=1.75in}
				\begin{quantikz}
					\lstick{$\ket{0}_{\sqrt{M}^{-1}}$}	&					&&&				& 				&\gate[5]{U_{\sqrt{M}^{-1}}}	&\\
					\lstick[2]{$\ket{0}_{\Phi_F}$}	&					&&&				&\gate[4]{U_{\Phi_F}}	&				&\\
					&					&&&				&				&				&\\
					\lstick{$\ket{0}_{\sqrt{W}}$}	&					&&&\gate[2]{U_{\sqrt{W}}}	&				&				&\\	
					&\qwbundle{n}&&&				&				&				&
				\end{quantikz}
			\end{adjustbox}	
		}
		\caption{(a) Circuit construction of a block encoding for a diagonal matrix
			\(\mathrm{diag}(d_0, d_1, \dots, d_{N-1})\), using a sequence of bit-string controlled 
			\(R_y\bigl(2\arccos(d_i)\bigr)\) gates. This approach extends directly to realize 
			a generalized \(B_{c,o}(N)\) by combining block-encoded \(\Phi_F\) and diagonal factors 
			via multiplication. Panels (b) and (c) depict the \((1,1,0)\)- and \((1,3,0)\)-block-encodings 
			of \(B_c(N)\) and \(B_o(N)\), respectively, each of which has gate complexity 
			\(\mathcal{O}\bigl(2\,N\,\log_2 N + \log_2^2N\bigr)\) . In (b) and (c), $\ket{0}_M, \ket{0}_{\Phi}$, and $\ket{0}_W$ denote the ancilla wires which extended independently according to the block encoding of $\sqrt{M}^{-1},\Phi_F^{c/o}$, and $\sqrt{W}$, respectively.}
	\end{figure*}
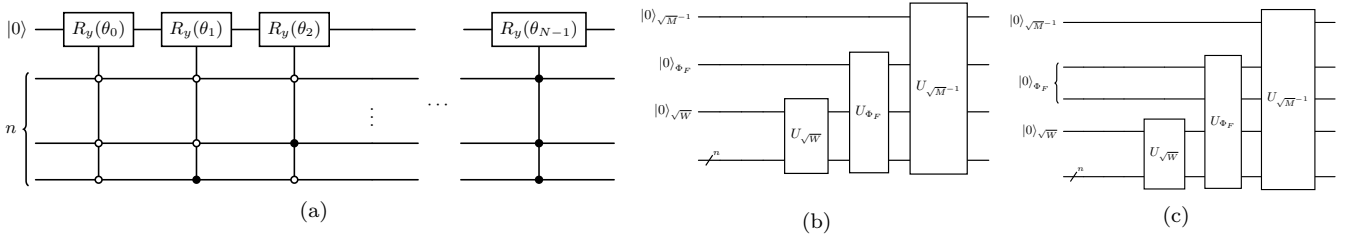
	
	\paragraph*{Closed-chained vs.\ open-chained gate complexity.}
	We now state the gate complexity analysis for both boundary conditions. 
	Recall that block encoding each diagonal matrix $\bigl(\sqrt{W}$ or $\sqrt{M}^{-1}\bigr)$ 
	requires $N$ multi-controlled rotations, one per diagonal entry. Decomposing each 
	multi-controlled rotation into single- and two-qubit gates typically costs 
	$\mathcal{O}\bigl(\log_2^r N\bigr)$, with $r=1$ or $2$ depending on the chosen 
	construction \cite{Barenco_1995}. Hence, encoding both diagonal operators contributes 
	$\mathcal{O}\bigl(2\,N\,\log_2^r N\bigr)$ in total. Meanwhile, the incidence-like matrix 
	$\Phi_F$ (which coincides with $\bar{B}$ in the uniform setting) incurs 
	$\mathcal{O}\bigl(\log_2^2 N\bigr)$ gates. Combining these yields 
	\[
	\mathcal{O}\!\bigl(2\,N\,\log_2^r N \;+\;\log_2^2 N\bigr)
	\]
	gates to block-encode $B$. 
	
	For closed-chained systems, this procedure realizes a $(1,3,0)$--block-encoding circuit 
	of $B_c$, illustrated in Fig.~\ref{circ:BE_genB_cy}. Likewise, open-chained systems 
	obtain a $(1,4,0)$--block-encoding circuit of $B_o$ (Fig.~\ref{circ:BE_genB_ncy}) 
	by adding an extra ancilla qubit for the boundary condition. In \emph{best-case} 
	(\emph{i.e.}, $r=1$ (linear overhead)) multi-controlled decompositions, both scenarios achieve
	\begin{equation}
		\label{eq:cost_generalBcBo_revised}
		G_c(N)
		\;=\;
		G_o(N)
		\;=\;
		\mathcal{O}\Bigl(2\,N\,\log_2 N + \log_2^2 N\Bigr).
	\end{equation}
	
	\subsubsection{Generalized Coupled Oscillators with \boldmath\(N \neq 2^n\)}
	\label{sss:genMKnoOsc_CCO_short}
	
	When the total number of oscillators \(N\) is not a power of two, we can embed the system into a larger dimension of size \(2^{\tilde{n}}\), where \(\tilde{n} \coloneqq \lceil\log_{2}N\rceil\). Only the incidence matrix \(\Phi_F\) requires explicit zero-padding (denoted \(\tilde{\Phi}_F\)), since any extra diagonal entries in the weight or mass matrices are automatically nullified. As before, \(\tilde{\Phi}_F\) decomposes into a shifted identity minus an $L$-shift term, but now includes rows and columns of zeros to match the \(2^{\tilde{n}}\)-dimension.
	
	Constructing these enlarged block encodings effectively amounts to moving any unwanted ones out of the top-left block, a process that adds only \(\mathcal{O}(\tilde{n})\) gates and one extra ancilla qubit. Consequently, the overall gate complexity for building $B$ in this non–\(2^n\) scenario remains the same order as before. Specifically, both closed- and open-chained systems achieve
	\begin{align*}
		\tilde{G}_c(N)
		=
		\tilde{G}_o(N)
		&=
		\mathcal{O}\Bigl( 
		2\tilde{N}\log_2 \tilde{N}+\log_2^2 \tilde{N} +\log_2 \tilde{N}\Bigr)\\
		&=\mathcal{O}\Bigl( 
		2\tilde{N}\log_2 \tilde{N}+\log_2^2 \tilde{N} \Bigr)
	\end{align*}
	where \(\tilde{N} \coloneqq 2^{\tilde{n}}\). We detail the full procedure along with representative circuit examples in the Appendix \ref{app:genMKnoOsc_CCO}.

	\subsection{Block Encoding Circuit of \boldmath\(\mathcal{H}\)}
	\label{subsec:BE_H}
	
	\begin{figure}[ht!]
		\centering
		\begin{quantikz}
			&\gate{H}			&\ctrl[open]{1}		&\ctrl{1}	&\gate{H}&\gate{R_y(2\pi)}& \\
			&\qwbundle{a+1}		&\gate[2]{U_B^{(01)}}		&\gate[2]{{U_B^{(10)}}}&&&\\
			&\qwbundle{n+1}		&					&						&&&
		\end{quantikz}
		\begin{adjustbox}{width=0.49\textwidth}
			=\begin{quantikz}
				\lstick[4]{$\ket{0}^{\otimes a_B+2}$}			&\gate{H}		&\gate{X}			&\ctrl{2}		&\ctrl{2}		&\ctrl{1}		&\ctrl{3}		&\ctrl{2}		&\gate{X}		&\ctrl{2}		&\ctrl{1}		&\ctrl{2}		&\ctrl{3}		&\ctrl{2}	& \gate{H}&\gate{R_y(2\pi)}&\\
				&				&					&\gategroup[5, steps=5, style={dashed, rounded corners, fill = blue!10, inner xsep = 2pt}, background, label style={label position = below, anchor = north, yshift=-0.4cm}]{\textcolor{blue}{$U_B^{(01)}$}}				&\gategroup[2, steps=2, style={dashed, rounded corners, fill = red!10, inner xsep = 2pt}, background, label style={label position = below, anchor = north, yshift=-0.2cm}]{\textcolor{red}{$U_{01}$}}				&\swap{1}		&				&				&				&\gategroup[5, steps=5, style={dashed, rounded corners, fill = blue!10, inner xsep = 2pt}, background, label style={label position = below, anchor = north, yshift=-0.4cm}]{\textcolor{blue}{$U_B^{(10)}$}}				&\swap{1}\gategroup[2, steps=2, style={dashed, rounded corners, fill = red!10, inner xsep = 2pt}, background, label style={label position = below, anchor = north, yshift=-0.2cm}]{\textcolor{red}{$U_{10}$}}		&				&				&			&&&\\
				&				&					&\swap{2}		&\targ{}		&\targX{}		&				&\swap{2}		&				&\swap{2}		&\targX{}		&\targ{}		&				&\swap{2}	&&&\\
				&\qwbundle{a_B-1}	&					&				&				&				&\gate[3]{U_B}	&				&				&				&				&				&\gate[3]{U_B^\dagger}&		&&&\\
				\lstick[2]{$\ket{\psi}$}					&				&					&\targX{}		&				&				&				&\targX{}		&				&\targX{}		&				&				&				&\targX{}	&&&\\
				&\qwbundle{n}	&					&				&				&				&				&				&				&				&				&				&				&			&&&
			\end{quantikz}
		\end{adjustbox}
		\caption{{\bf An explicit \(\bigl(2\,\alpha_B,\,a_B+2,\,0\bigr)\)-block-encoding circuit of the Hamiltonian \(\mathcal{H}\), constructed from a \(\bigl(\alpha_B,\,a_B,\,0\bigr)\)-block-encoding of \(B\).} The circuit consists of two block-encodings of the operator 
			\(\dyad{0}{1}\otimes B\),  and its Hermitian-conjugate counterpart \(\ket{1}\!\!\bra{0}\otimes B^\dagger\).The circuits $U_{01}$ and $U_{10}$ represent a block encoding of $\dyad{0}{1}$ and $\dyad{1}{0}$, respectively. These can be done by using SWAP and NOT gate. Finally, an \(R_y(2\pi)\) rotation applies an overall factor of \(-1\) to match \(\mathcal{H} = -\bigl(\ket{0}\!\!\bra{1}\otimes B + \ket{1}\!\!\bra{0}\otimes B^\dagger\bigr)\).} 
		\label{circ:BE_H}
	\end{figure}
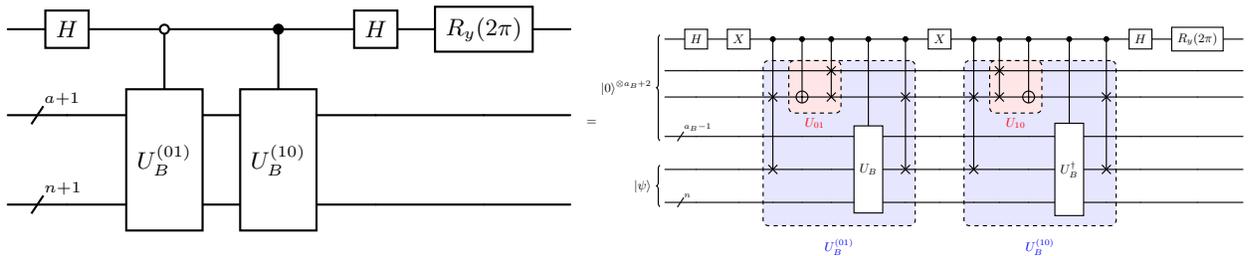
	
	After the block-encoding of \(B\) is obtained, we next build a block-encoding of the Hamiltonian \(\mathcal{H}\) in ~\eqref{eq:CO_H}. Recall 
	\[
	\mathcal{H} 
	\;=\;
	-\,\bigl(\dyad{0}{1}\otimes B\bigr)\;-\;\bigl(\dyad{1}{0}\otimes B^\dagger\bigr).
	\]
	We employ a linear-combination-of-unitaries (LCU) approach, combining:
	\begin{enumerate}
		\item A \(\bigl(\alpha_B,a_B,0\bigr)\)-block-encoding of \(B\), denoted \(U_B\), and
		\item A \(\bigl(1,\,a_B+1,\,0\bigr)\)-block-encoding of \(\dyad{0}{1}\otimes B\), denoted \(U_B^{(01)}\).
	\end{enumerate}
	The combined circuit (depicted in Fig.~\ref{circ:BE_H}) is 
	\begin{equation}
		U_{\mathcal{H}}=\bigl(H\otimes I_{n+a_B+2}\bigr)\Bigl[\dyad{0}\otimes U_B^{(01)\dagger} +\dyad{1}\otimes U_B^{(01)}\Bigr]
		\bigl(H\otimes I_{n+{a_B}+2}\bigr), \label{eq:BE_H}
	\end{equation}
	yielding a \(\bigl(\alpha_{\mathcal{H}},\,a_{\mathcal{H}},\,0\bigr)\)-block-encoding of \(\mathcal{H}\), where \(\alpha_{\mathcal{H}} = 2, \alpha_B\) and \(a_{\mathcal{H}} = a_B + 2\). (Here, the extra factor of 2 arises from the sum of \(\dyad{0}{1}\otimes B\) and its Hermitian conjugate, and the gate \(R_y(2\pi)\) in the circuit effectively introduces a global \(-1\) factor.)
	
	Since the complexity of block-encoding \(\mathcal{H}\) directly inherits that of \(B\), we have
	\begin{equation}
		G_{\mathcal{H}}(N) 
		=\begin{cases}
			\mathcal{O}\bigl(\log_2^2 N\bigr), 
			& \text{(uniform system)}\\[6pt]
			\mathcal{O}\bigl(2N\log_2 N +\log_2^2 N \bigr).
			& \text{(generalized system)} 
		\end{cases} 
		\numberthis \label{eq:GH}
	\end{equation}
	Thus, once \(B\) is encoded, obtaining a block-encoding of \({\mathcal{H}}\) requires only a modest overhead from LCU composition. We now proceed to show how \(e^{-\,i\mathcal{H}t}\) can be approximated via QSVT using this block-encoding.
	
	\subsection{Block Encoding Circuit of \boldmath\(\exp(-i\mathcal{H}t)\)}
	\label{subsec:BE_eiht}
	
	Having obtained a block-encoding circuit of $\mathcal{H}$, we now realize the time-evolution operator \(e^{-\,i\mathcal{H}\,t}\) via QSVT. While QSVT can approximate a polynomial (or related trigonometric expansions) of a block-encoded operator, certain constraints arise  (cf.  the QSVT constraints in Section \ref{subsec:QSVT}.):
	\begin{enumerate}[label=(\roman*)]
		\item the target polynomial must be supported on a positive-definite domain,
		\item it must have a definite parity (odd or even), and 
		\item it must be magnitude-bounded by 1.
	\end{enumerate}
	
	\paragraph*{1. Shifting $\mathcal{H}$ to a positive operator.}
	Because the eigenvalues of the block-encoded $H$ may lie in $[-1,1]$, we cannot directly apply QSVT to $\mathcal{H}$ itself. Instead, we introduce
	\[
	\hat{\mathcal{H}}
	\;\coloneqq\;
	\tfrac{1}{2}\,\bigl(\mathcal{H}/\alpha_{\mathcal{H}} + I\bigr),
	\]
	so that \(\hat{\mathcal{H}}\) is strictly positive and satisfies \(\|\hat{\mathcal{H}}\|\ge1\). In practice, we implement a \(\bigl(1,a_{\mathcal{H}}+1,0\bigr)\)-block-encoding of \(\hat{\mathcal{H}}\) by linearly combining $U_\mathcal{H}$ (the block-encoding of $\mathcal{H}$) with the identity operator.
	
	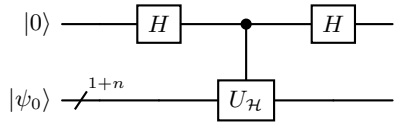
\begin{figure}[ht!]
		\centering
		\begin{quantikz}
			\lstick{$\ket{0}$}	&&\gate{H}	&\ctrl{1}	&\gate{H}	&\\
			\lstick{$\ket{\psi_0}$}	&\qwbundle{1 + n}&&\gate{U_{\mathcal{H}}}& &
		\end{quantikz}
		\caption{{\bf A $(1,\,a_{\mathcal{H}}+1,\,0)$-block-encoding of 
				\(\hat{\mathcal{H}} \equiv \tfrac{1}{2}\bigl(\mathcal{H}/\alpha_{\mathcal{H}} + I\bigr)\).} 
			We use LCU to combine \(\mathcal{H}/\alpha_{\mathcal{H}}\) and \(I\).}
		\label{circ:BE_H_tilde}
	\end{figure}
	
	\paragraph*{2. Enforcing polynomial parity and unit magnitude.}
	We wish to approximate 
	\[
	e^{-\,i\mathcal{H}t}
	\;=\;
	\cos\bigl(\mathcal{H}\,t\bigr)
	-
	i\;\sin\bigl(\mathcal{H}\,t\bigr).
	\]
	Since $\cos(\cdot)$ is even and $\sin(\cdot)$ is odd, both fit QSVT’s parity requirement. However, because QSVT also necessitates the polynomial to have amplitude at most 1 on the domain, we first rescale these trigonometric functions to 
	\[
	\tfrac{1}{2}\,\cos(\hat{\mathcal{H}}\,\tau)
	\quad\text{and}\quad
	-\tfrac{i}{2}\,\sin(\hat{\mathcal{H}}\,\tau),
	\]
	where we defined the {\it effective time} \(\tau \coloneqq 2\,\alpha_{\mathcal{H}}\,t\). This ensures that each subroutine is bounded by 1, allowing us to build separate block-encodings for \(\cos(\hat{\mathcal{H}}\tau)\) and \(\sin(\hat{\mathcal{H}}\tau)\). 
	In practice, we first compute block encodings of \(\cos(\hat{\mathcal{H}}\,\tau)\) and \(-i\,\sin(\hat{\mathcal{H}}\,\tau)\) using phase angles generated by \texttt{QSPPACK}.\texttt{QSPPACK} provides a set of phase angles to approximate trigonometric polynomials (or other target functions) via QSVT up to error \(\epsilon\). This yields polynomial approximations \(P_{\texttt{QSPPACK}}^{\cos}\) and \(P_{\texttt{QSPPACK}}^{\sin}\) that each lie within \(\epsilon\) of the target functions $\cos(\cdot)$ and $\sin(\cdot)$, respectively.
	
	Next, we combine these two encodings with an LCU step to form 
	\[
	\tfrac12 \times \tfrac12\Bigl[\cos(\hat{\mathcal{H}}\,\tau)\;-\;i\,\sin(\hat{\mathcal{H}}\,\tau)\Bigr]
	\;=\;
	\tfrac14\,e^{-\,i\,\hat{\mathcal{H}}\,\tau}.
	\]
	Note that in the \(-\,i\,\sin\) block encoding, the factor of \(-i\) can be implemented by a simple sequence of single-qubit gates (\(X\) then \(Y\)) on an ancilla qubit. Also, the extra  factor of $1/2$ comes from the state-preparation in the LCU step.
	
	Finally, we apply an \(R_z(-\tau)\) gate to remove the global phase factor \(e^{-\,i\tau/2}\). Substituting back \(\hat{\mathcal{H}} = \bigl(\mathcal{H}/\alpha_{\mathcal{H}} + I\bigr)/2\) and \(\tau=2\,\alpha_{\mathcal{H}}\,t\) yields our final \(\bigl(\alpha_{\text{HS}},\,a_{\mathcal{H}}+3,\,\epsilon\bigr)\)-block-encoding of \(e^{-\,i\,\mathcal{H}\,t}\) where $\alpha_{\text{HS}}$ is 4 in our work. Concretely,
	\[
	\tfrac14\,e^{-\,i\,\hat{\mathcal{H}}\,\tau}
	\;\;\xrightarrow{R_z(-\tau)}\;\;
	\frac{\tfrac14\,e^{-\,i\,\hat{\mathcal{H}}\,\tau}}{\,e^{-\,i\tau/2}\,}
	\;=\;
	\tfrac14\,e^{-\,i\,{\mathcal{H}}\,t}
	\]
	up to a known sub-normalization.  Three additional ancilla qubits arise from: 
	1) the LCU step needed to transform \(\hat{\mathcal{H}}\), 
	2) the QSVT-based construction of the \(\cos\) and \(\sin\) block encodings, 
	and 3) the final LCU procedure that assembles the time-evolution operator. Figure~\ref{circ:BE_HS_H} illustrates the core circuit.
	
	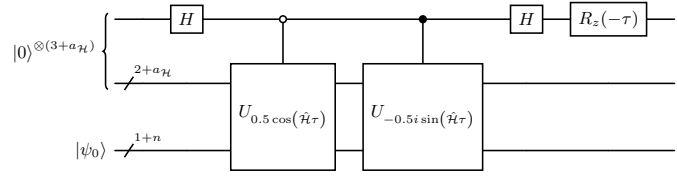
\begin{figure}[ht!]
		\centering
		\begin{adjustbox}{width=0.5\textwidth}
			\begin{quantikz}
				\lstick[2]{$\ket{0}^{\otimes (3 + a_{\mathcal{H}})}$} &					&\gate{H}	&\ctrl[open]{1}	&\ctrl{1}	&\gate{H} &\gate{R_z(-\tau)}&\\
				&\qwbundle{2 + a_{\mathcal{H}}}&	&\gate[2]{U_{0.5\cos(\hat{\mathcal{H}}\tau)}}	&\gate[2]{U_{-0.5i \sin(\hat{\mathcal{H}}\tau)}} & && \\
				\lstick{$\ket{\psi_0}$}					&\qwbundle{1 + n}		&&&&&&
			\end{quantikz}
		\end{adjustbox}
		\caption{{\bf A \(\bigl(\alpha_{\text{HS}},a_{\mathcal{H}}+3,\epsilon\bigr)\)-block-encoding of \(e^{-\,i\mathcal{H}\,t}\) where $\alpha_{\text{HS}} = 4$.} We construct approximations for \(\cos(\hat{\mathcal{H}}\,\tau)\) and \(-\,i\,\sin(\hat{\mathcal{H}}\,\tau)\) (with phase angles from \texttt{QSPPACK}), recombine them via LCU into \(\tfrac{1}{2}e^{-\,i\hat{\mathcal{H}}\,\tau}\), and then apply an \(R_z(-\tau)\) to remove an extra phase \(e^{-\,i\tau/2}\), with \(\tau=2\,\alpha_{\mathcal{H}}\,t\).}
		\label{circ:BE_HS_H}
	\end{figure}

	\paragraph*{3. Circuit complexity.}
	We now discuss how the overall circuit cost scales with the simulation time \(t\) and the desired accuracy \(\epsilon\). In particular, we must approximate \(\cos(\hat{\mathcal{H}}\,\tau)\) and \(\sin(\hat{\mathcal{H}}\,\tau)\) by suitable polynomials via QSVT, where \(\tau = 2\,\alpha_{\mathcal{H}}\,t\). Following the analysis in Ref.~\cite{Dong_2021} (which adopts Jacobi--Anger expansions of sine and cosine functions), the required polynomial degree \(k(t,\epsilon)\) satisfies 
	\begin{equation}
		k(t,\epsilon)
		\;\approx\;
		\Bigl\lceil\,1.4\,\tau\;+\;\log\bigl(1/\epsilon\bigr)\Bigr\rceil,
	\end{equation}
	reflecting that longer time \(t\) and smaller error \(\epsilon\) both demand a higher-degree expansion of \(\sin\) and \(\cos\) in order to maintain the target approximation fidelity over the relevant domain. Note that, for $\epsilon \in (0,0.5)$ and $t \in \mathbb{R}$, a more precise (though less operational) approximation is given by $k(t,\epsilon) = \Theta \qty\Bigg(\alpha_{\mathcal{H}}\abs{t} + \frac{\log(1/\epsilon)}{\log (e + \frac{\log(1/\epsilon)}{\alpha_{\mathcal{H}}\abs{t}})})$.     
	
	In practice, we obtain the phase angles for these polynomial approximations using the \texttt{QSP\_Solver} routine from \texttt{QSPPACK} (in MATLAB), which may shift the sine or cosine degrees by 1 to accommodate parity:
	\begin{equation*} \label{eq:QET_MATLAB_dSin}
		d_{\sin} = 
		\begin{cases}
			k 		& \text{if $k$ is even}, \\
			k-1 	& \text{if $k$ is odd},
		\end{cases}
	\end{equation*}
	\begin{equation*} \label{eq:QET_MATLAB_dCos}
		d_{\cos} = 
		\begin{cases}
			k-1 		& \text{if $k$ is even}, \\
			k 	& \text{if $k$ is odd},
		\end{cases}
	\end{equation*}
	Hence, implementing both the sine and cosine expansions demands \(d_{\sin}+d_{\cos}=2\,k-1\) applications of the controlled block-encoding \(\mathrm{CU}_{\hat{\mathcal{H}}}\), alongside only \(\mathcal{O}(1)\) gates for the \((a_{\mathcal{H}}+3)\)-qubit projector-controlled phase operations. Notably, \(\mathrm{CU}_{\hat{\mathcal{H}}}\) shares the same gate complexity \(G_{\mathcal{H}}\) as the underlying (uncontrolled) block-encoding \(U_{\hat{\mathcal{H}}}\). Thus, constructing the \(\bigl(4,a_{\mathcal{H}}+3,0\bigr)\)-block-encoding of \(e^{-i\mathcal{H}t}\) to precision \(\epsilon\) has the gate  complexity.
	\begin{equation}
		\label{eq:gateCom_HS_explicit}
		G_{e^{-i\mathcal{H}\,t}}(t,\epsilon)
		\;=\;
		\Bigl(2\,\bigl\lceil 2.8\,\alpha_{\mathcal{H}}\,t \;+\;\log\bigl(1/\epsilon\bigr)\bigr\rceil - 1\Bigr)\;G_{\mathcal{H}}.
	\end{equation}
	Recall from \eqref{eq:GH} that \(G_{\mathcal{H}}=\mathcal{O}(\log_2^2 N)\) in the uniform case, or \(G_{\mathcal{H}}=\mathcal{O}(2\,N\,\log_2 N+\log_2^2 N)\) for generalized masses and spring constants. Thus, the final gate complexity for an \(\epsilon\)-close time-evolution circuit at simulation time~\(t\) scales as
	\[
	\mathcal{O}\Bigl[\bigl(\alpha_{\mathcal{H}}\,t+\log(1/\epsilon)\bigr)\;G_{\mathcal{H}}\Bigr].
	\]
	Table~\ref{table:gateCom_timeEvol} summarizes the key gate counts in this final QSVT-based construction.
	
	\begin{table}[!ht]
		\centering
		\caption{{\bf Gate counts in constructing the \(\bigl(4,a_H+3,\epsilon\bigr)\)-block-encoding circuit of \(e^{-\,i\mathcal{H}\,t}\).} 
			Here, \(k\approx 1.4\,\tau+\log(1/\epsilon)\) denotes the polynomial degree.}
		\label{table:gateCom_timeEvol}
		\setlength{\tabcolsep}{3pt}
		\begin{tabular}{|c|c|}
			\hline
			\textbf{Gate type} & \textbf{Number of calls}\\\hline
			\(R_z\) & 1\\\hline
			\(H\) & \(2+2(2k-1)\)\\\hline
			$\mathrm{C}R_z$ & 1\\\hline
			$\mathrm{C}Z$ & 1\\\hline
			$\mathrm{CU}_{\mathcal{H}}$ & $2k-1$\\\hline
			$\mathrm{C}\Pi(a_{\mathcal{H}}+3)$ & $2k-1$\\\hline
		\end{tabular}
	\end{table}

	\subsection{Full Quantum Algorithm for Hamiltonian Simulation}
	\label{subsec:fullHS}
	
	\begin{figure*}
		\centering
		\begin{adjustbox}{width=0.95\textwidth}
			\begin{quantikz}[slice style = blue]
				\lstick{$\ket{0}^{\otimes a}$} &\qwbundle{a} & \slice{$\ket{0}^{\otimes a}\ket{\psi_0}$} &\gate[2]{U_{e^{-i{\mathcal{H}}t}}} 	& \gate{R_{\psi_g} = R_0(a)}\gategroup[2, steps=6, style={dashed, rounded corners, fill = blue!20, inner xsep = 2pt}, background, label style={label position = above, anchor = north, yshift=0.4cm}]{\textcolor{blue}{$W$}} 	&\gate[2]{U_{e^{-i{\mathcal{H}}t}}^\dagger} 	& \gategroup[2, steps=3, style={dashed, rounded corners, fill = red!20, inner sep = 1pt}, background, label style={label position = above, anchor = north, yshift=0.42cm}]{\textcolor{red}{$R_{\psi_0}$}}						& \gate[2]{R_{0}(a+n+1)} & 			& \gate[2]{U_{e^{-i{\mathcal{H}}t}}} & \meter{} \rstick[brackets=none]{$\ket{0}^{\otimes a}$} \\
				\lstick{$\ket{0}^{\otimes n+1}$} &\qwbundle{n+1} & \gate{S} 	& 				& 								& 						& \gate{S^\dagger} 	& 			 		&\gate{S} & 				& \rstick[brackets=none]{$\frac{e^{-i{\mathcal{H}}t} \ket{\psi_0}}{\norm{e^{-i{\mathcal{H}}t} \ket{\psi_0}}_2}$}
			\end{quantikz}
		\end{adjustbox}
		\caption{{\bf A schematic of the full circuit implementing of the time-evolution operator \(e^{-i\mathcal{H}t}\) with \emph{robust oblivious amplitude amplification} (ROAA).} The operator \(U_{e^{-i\mathcal{H}t}}\) is the \(\bigl(\alpha_{\text{HS}},a_{\mathcal{H}}+3,\epsilon\bigr)\)-block-encoding of \(e^{-i\mathcal{H}t}\), with $\alpha_{\text{HS}} = 4$, described in Section \ref{subsec:BE_eiht}. A state-preparation gate \(S\) initializes \(\ket{\psi_0}\) from \(\ket{0}^{n+1}\). The ROAA machinery (center box), iterated for $Q_W$ times, boosts the amplitude of the “good” outcome, ensuring that measuring the ancillas in \(\ket{0}^{a}\) returns the properly time-evolved state \(\tfrac{e^{-i\mathcal{H}t}\ket{\psi_0}}{\|\,e^{-i\mathcal{H}t}\ket{\psi_0}\|}\) with high probability.}
		\label{circ:BE_fullHS_AA}
	\end{figure*}
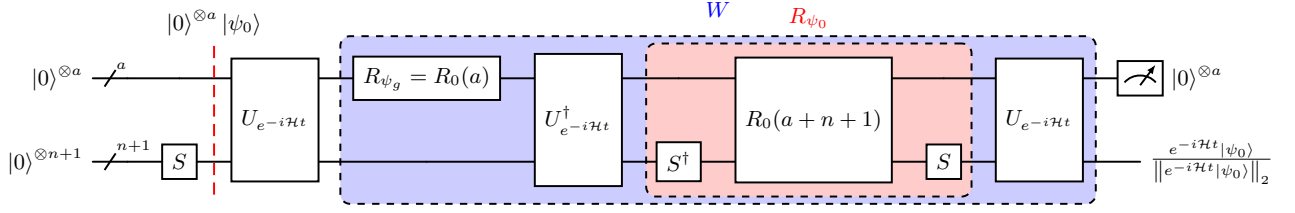
	Having constructed a block-encoding of \(e^{-\,i\mathcal{H}\,t}\), 
	our next step is to amplify the probability of projecting the ancilla qubits onto 
	\(\ket{0^a}\) (so that the signal register is left in the desired time-evolved state). 
	Whereas standard amplitude amplification suffices for many applications \cite{grover1996fast}, 
	the \emph{robust oblivious amplitude amplification} (ROAA) protocol~\cite{Berry_Childs_Cleve_Kothari_Somma_2014} 
	often proves more efficient for block-encoded matrices~(see Section~\ref{subsec:ROAA}).

	\paragraph*{1. ROAA procedure.}
	Let \(\ket{\psi_0} \coloneq \ket{\psi(0)}\) of the signal register be the initial state encoding the initial condition of our spring-mass system defined according to \eqref{eq:CO_state}, and suppose we have a 
	\(\bigl(\alpha_{\mathrm{HS}},\,a,\,\epsilon\bigr)\)-block-encoding of \(e^{-\,i\mathcal{H}\,t}\), 
	which we denote \(U_{e^{-\,i\mathcal{H}\,t}}\). Acting on \(\ket{\psi_0}\) with this block-encoding 
	(and measuring the ancillas) ideally yields \(e^{-\,i\mathcal{H}\,t}\ket{\psi_0}\). However, 
	the measurement success probability may be small. To boost it, we apply ROAA 
	with cost proportional to
	\begin{equation}
		Q_{\mathrm{W}}\bigl(\alpha_{\mathrm{HS}}\bigr)
		\;=\;
		\mathcal{O}\qty({\frac{\alpha_{\mathrm{HS}}}{\norm{\ket{\psi_0}}_2}}),
		\label{eq:timeCom_W}
	\end{equation}
	where $W$ denotes the Grover iteration operator defined in Section \ref{subsec:ROAA} and $\alpha_{\mathrm{HS}} = 4$ in our work. This quantity captures how many amplitude-amplification queries (iterations) are required to guarantee a high-probability measurement of the desired time-evolved state \(e^{-\,i\mathcal{H}\,t}\ket{\psi_0}\) as shown in Ref. \cite{Berry_Childs_Cleve_Kothari_Somma_2014}. Further, each $W$ calls the block-encoding \(U_{e^{-\,i\mathcal{H}\,t}}\) twice, 
	so the total query count, including the leftmost one, is
	\begin{equation} 
		Q_{\mathrm{AA}}\bigl(\alpha_{\mathrm{HS}}\bigr)
		\;=\;
		2\;Q_{\mathrm{W}}\bigl(\alpha_{\mathrm{HS}}\bigr)\;+\;1.    
		\label{eq:timeCom_AA}
	\end{equation}
	
	\paragraph*{2. Putting it all together.}
	Combining the above with our earlier estimate for constructing \(U_{e^{-\,i\mathcal{H}\,t}}\), 
	the overall cost \(G_{\mathrm{HS+AA}}\bigl(\alpha_{\mathrm{HS}},\,t,\,\epsilon\bigr)\) 
	at a single simulation time \(t\) with the error $\epsilon$ scales as
	\begin{equation}
		G_{\mathrm{HS+AA}}\bigl(\alpha_{\mathrm{HS}},\,t,\,\epsilon\bigr)
		\;=\;
		Q_{\mathrm{AA}}\bigl(\alpha_{\mathrm{HS}}\bigr)\;
		G_{e^{-\,i\mathcal{H}\,t}}\bigl(t,\,\epsilon\bigr),
		\label{eq:timeCom_HS_AA_completed}
	\end{equation}
	where \(G_{e^{-\,i\mathcal{H}\,t}}\) is the gate complexity for building 
	\(\bigl(\alpha_{\mathrm{HS}},\,a_{\mathcal{H}}+3,\,\epsilon\bigr)\)-block-encodings of \(e^{-\,i\mathcal{H}\,t}\). 
	Figure~\ref{circ:BE_fullHS_AA} summarizes the final quantum circuit, with $a=a_{\mathcal{H}}+3$ 
	as the total number of ancillas. 
	
	We now state our main result for simulating the coupled‐oscillator Hamiltonian \(\mathcal{H}\) up to time~\(t\), and obtaining the correct time‐evolved state with high probability.
	
	\begin{theorem}\label{thm:tsim_t}
		Let \(\alpha_{\mathcal{H}}\) be the sub‐normalization constant of the block‐encoding for \(\mathcal{H}\), and let \(\alpha_{\mathrm{HS}}\) be the sub‐normalization constant of the block‐encoding for \(e^{-\,i\mathcal{H}\,t}\). Suppose the initial state is \(\ket{\psi_0}\). Then, we can implement an \(\epsilon\)-close Hamiltonian simulation at simulation time~\(t\) with overall gate complexity
		\begin{equation}
			G_{\mathrm{HS+AA}}(\alpha_{\mathcal{H}},\alpha_{\mathrm{HS}},\epsilon; t) =  \qty({2 \Bigg\lceil \qty({\frac{\abs{\alpha_{\mathrm{HS}}}}{\norm{\psi_0}_2}}) \Bigg\rceil} + 1)
			\times \qty({2 \bigg\lceil 2.8\alpha_{\mathcal{H}}t + \log(\frac{1}{\epsilon}) \bigg\rceil - 1}) \times G_{\mathcal{H}} \label{eq:timeCom_HS_AA}
		\end{equation}
		where \(G_{\mathcal{H}}\) is the gate complexity of block‐encoding \(\mathcal{H}\) (see ~\eqref{eq:GH}).
	\end{theorem}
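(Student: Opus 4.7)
The plan is to prove Theorem~\ref{thm:tsim_t} by directly composing the three cost estimates already established in Sections~\ref{subsec:BE_H}, \ref{subsec:BE_eiht}, and \ref{subsec:fullHS}, since the theorem is effectively a consolidation statement rather than an independent technical result. In particular, Equation~\eqref{eq:timeCom_HS_AA_completed} already asserts that $G_{\mathrm{HS+AA}} = Q_{\mathrm{AA}}(\alpha_{\mathrm{HS}}) \cdot G_{e^{-i\mathcal{H}t}}(t,\epsilon)$, and Equation~\eqref{eq:gateCom_HS_explicit} expresses $G_{e^{-i\mathcal{H}t}}$ in terms of $G_{\mathcal{H}}$. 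Thus the proof amounts to substituting the explicit expressions for each factor and verifying that all counting steps are accounted for.

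First, I would recall from Section~\ref{subsec:BE_eiht} that the QSVT construction of $e^{-i\mathcal{H}t}$ requires separately block-encoding $\cos(\hat{\mathcal{H}}\tau)$ and $-i\sin(\hat{\mathcal{H}}\tau)$ with polynomial degrees $d_{\cos}$ and $d_{\sin}$ whose sum is $2k-1$, where $k = \lceil 1.4\tau + \log(1/\epsilon)\rceil$ and the effective time is $\tau = 2\alpha_{\mathcal{H}}t$. Substituting $\tau$ yields $2k-1 = 2\lceil 2.8\alpha_{\mathcal{H}}t + \log(1/\epsilon)\rceil - 1$ calls to the controlled block-encoding $\mathrm{CU}_{\hat{\mathcal{H}}}$, and since $\mathrm{CU}_{\hat{\mathcal{H}}}$ inherits the gate complexity $G_{\mathcal{H}}$ (up to $\mathcal{O}(1)$ projector-controlled phase gates which are absorbed into the leading order), we obtain the second factor in~\eqref{eq:timeCom_HS_AA}.

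Second, I would invoke the ROAA analysis of Section~\ref{subsec:ROAA} and Section~\ref{subsec:fullHS}. The number of Grover iterations required to boost the ancilla-on-$\ket{0^a}$ success probability close to unity is $Q_W = \mathcal{O}(\alpha_{\mathrm{HS}}/\|\psi_0\|_2)$, as quoted from Ref.~\cite{Berry_Childs_Cleve_Kothari_Somma_2014}. Each Grover iteration $W$ contains one copy of $U_{e^{-i\mathcal{H}t}}$ and one copy of its inverse, and there is a single additional leftmost block-encoding call, so the total number of queries is $Q_{\mathrm{AA}} = 2Q_W + 1 = 2\lceil \alpha_{\mathrm{HS}}/\|\psi_0\|_2 \rceil + 1$. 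Multiplying $Q_{\mathrm{AA}}$ by $G_{e^{-i\mathcal{H}t}}(t,\epsilon)$ from the previous step immediately yields the product form stated in~\eqref{eq:timeCom_HS_AA}.

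The main subtle point, and the place where the proof is not entirely mechanical, is justifying that the reflection operators $R_{\psi_0}$ and $R_{\psi_g}$ used inside each Grover iteration, together with the projector-controlled phase gates used inside each QSVT block, contribute only $\mathcal{O}(1)$ or at most sub-leading costs compared to $G_{\mathcal{H}}$, so that the triple product truly captures the dominant gate count. I would address this by noting that each reflection acts on $a+n+1$ qubits and decomposes into a multi-controlled $Z$ together with $X$ gates (Fig.~\ref{fig:AA_amplitude}(a)), whose cost is polynomial in $\log_2 N$ and hence dominated by $G_{\mathcal{H}}$ in both the uniform and generalized regimes of~\eqref{eq:GH}. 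Similarly, the state-preparation unitary $S$ and its inverse appearing in ROAA are assumed available and do not scale with $t$ or $\epsilon$, so they contribute at most an additive overhead that is absorbed into the leading asymptotic. With these observations in place, the three factor formula of~\eqref{eq:timeCom_HS_AA} follows by direct substitution, completing the proof.
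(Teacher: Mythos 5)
Your proposal is correct and follows essentially the same route as the paper: the theorem is obtained by substituting the ROAA query count $Q_{\mathrm{AA}} = 2Q_W + 1$ from Eqs.~\eqref{eq:timeCom_W}--\eqref{eq:timeCom_AA} and the QSVT cost $G_{e^{-i\mathcal{H}t}} = (2\lceil 2.8\alpha_{\mathcal{H}}t + \log(1/\epsilon)\rceil - 1)\,G_{\mathcal{H}}$ from Eq.~\eqref{eq:gateCom_HS_explicit} into the product formula of Eq.~\eqref{eq:timeCom_HS_AA_completed}. Your extra remarks on why the reflections, projector-controlled phases, and state-preparation gates are sub-leading make explicit a point the paper leaves implicit, but do not change the argument.
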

	
	In ~\eqref{eq:timeCom_HS_AA}, the first factor in parentheses arises from ROAA, ensuring a high‐probability measurement of the time‐evolved state (cf.\ Eqs.~\eqref{eq:timeCom_W}–\eqref{eq:timeCom_AA}). The second factor follows the Hamiltonian‐simulation cost, given by our earlier analysis (cf. ~\eqref{eq:gateCom_HS_explicit}). 
	See Fig.~\ref{circ:BE_fullHS_AA} for an example circuit implementing this procedure.

	\paragraph*{3. Extending to multiple time steps.}
	In practice, one may wish to extract the time-evolved state at multiple time points \(t_i,\,t_i+\Delta t,\,\dots,\;t_f = t_i + M \Delta t\). At each step \(t_i + j\,\Delta t\), we apply the block-encoded time-evolution operator \(e^{-i\mathcal{H}\,(t_i + j\,\Delta t)}\) and then perform ROAA to amplify the success probability of measuring the desired state. This procedure, repeated for \(j=0,1,\dots,M\), is precisely what we implement in the numerical simulations in Section~\ref{sec:results_exp}. 
	
	The overall gate complexity simply accumulates the single-time-evolution cost \eqref{eq:timeCom_HS_AA} over all \(M+1\) increments, as formalized below
	
	\begin{corollary}\label{thm:tsim_full}
		Let \(\alpha_{\mathcal{H}}\) be the sub-normalization constant of the Hamiltonian \(\mathcal{H}\)'s block-encoding, and let \(\alpha_{\mathrm{HS}}\) be that of the time-evolution operator \(e^{-i\mathcal{H}\,t}\). Then, to simulate from an initial time \(t_i\) to a final time \(t_f\) in uniform steps of size \(\Delta t\), the total cost of an \(\epsilon\)-close Hamiltonian simulation is
		\begin{equation}
			G_{\mathrm{sim}}(\alpha_{\mathcal{H}},\alpha_{\mathrm{HS}},\epsilon; t_i,\Delta t, M) = 
			\sum_{j=0}^{M} \Bigg[G_{\mathrm{HS+AA}}(\alpha_{\mathcal{H}},\alpha_{\mathrm{HS}},\epsilon; t_i + j\Delta t)\Bigg] \numberthis \label{eq:timeCom_HS_AA_FULL}
		\end{equation}
		where \(M \coloneqq (t_f - t_i)/\Delta t\), and \(G_{\mathrm{HS+AA}}\) follows from Theorem~\ref{thm:tsim_t}. 
	\end{corollary}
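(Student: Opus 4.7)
The strategy is straightforward: apply Theorem~\ref{thm:tsim_t} once for each of the $M+1$ sampled times $t_i + j\Delta t$, $j=0,1,\ldots,M$, and then combine the resulting costs by additivity of elementary gate counts across independent circuits. First, I would recall that Theorem~\ref{thm:tsim_t} already packages every ingredient needed to simulate at a single requested time $t$: the block-encoding of $\mathcal{H}$ from Section~\ref{subsec:BE_H}, the QSVT-based construction of $U_{e^{-i\mathcal{H}t}}$ from Section~\ref{subsec:BE_eiht}, and the ROAA stage from Section~\ref{subsec:fullHS}, delivering an $\epsilon$-close time-evolved state with high probability at cost $G_{\mathrm{HS+AA}}(\alpha_{\mathcal{H}},\alpha_{\mathrm{HS}},\epsilon; t)$.

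Next, I would instantiate this single-time construction independently at each target time $t_j \coloneqq t_i + j\Delta t$. Concretely, for every $j$ the protocol restarts from the initial state $\ket{\psi_0}$ prepared by $S$, builds $U_{e^{-i\mathcal{H}t_j}}$ from a fresh \texttt{QSPPACK} phase sequence tuned to $t_j$, runs ROAA, and measures the ancilla register. Because these $M+1$ executions use disjoint ancilla registers, independent QSVT phase schedules, and an independent ROAA amplification loop, no subroutine is shared between them. Hence the elementary-gate count of the composite multi-time protocol is exactly $\sum_{j=0}^{M} G_{\mathrm{HS+AA}}(\alpha_{\mathcal{H}},\alpha_{\mathrm{HS}},\epsilon; t_i + j\Delta t)$, which is precisely the expression claimed in~\eqref{eq:timeCom_HS_AA_FULL}.

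The only subtle point, and thus the main obstacle in a fully rigorous write-up, is verifying that nothing can be amortized across time steps to beat the simple sum. The polynomial degree $k(t_j,\epsilon)$, the phase angles, and the global-phase correction $R_z(-\tau_j)$ with $\tau_j = 2\alpha_{\mathcal{H}}(t_i + j\Delta t)$ all depend explicitly on $t_j$, so the circuits $U_{e^{-i\mathcal{H}t_j}}$ are genuinely distinct and cannot be shared across different $j$. Moreover, ROAA terminates with a projective measurement, which destroys coherence and prevents reusing an intermediate amplified state at a later time. These two observations confirm that the per-step cost of Theorem~\ref{thm:tsim_t} lifts additively to the total cost in~\eqref{eq:timeCom_HS_AA_FULL}, completing the proof of the corollary.
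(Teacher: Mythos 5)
Your proposal is correct and takes essentially the same route as the paper: the paper simply notes that the multi-time protocol runs the single-time circuit of Theorem~\ref{thm:tsim_t} independently at each of the $M+1$ times $t_i + j\Delta t$ and accumulates the costs, which is exactly your additivity argument. Your extra remarks on why no amortization across time steps is possible (distinct phase schedules, measurement destroying coherence) go slightly beyond what the paper states but are consistent with its construction.
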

	
	Hence, by iterating block-encoded time evolution and amplitude amplification at each time slice, one advances the system from \(t_i\) to \(t_f\) in discrete increments.  The final state at each step is amplified to high probability on the ancillas, yielding the correct time-evolved state of the signal qubits. 
	
	\section{Simulation Results}
	\label{sec:results_exp}
	
	\begin{figure*}[ht!]
		\centering
		\includegraphics[width = 0.9\textwidth]{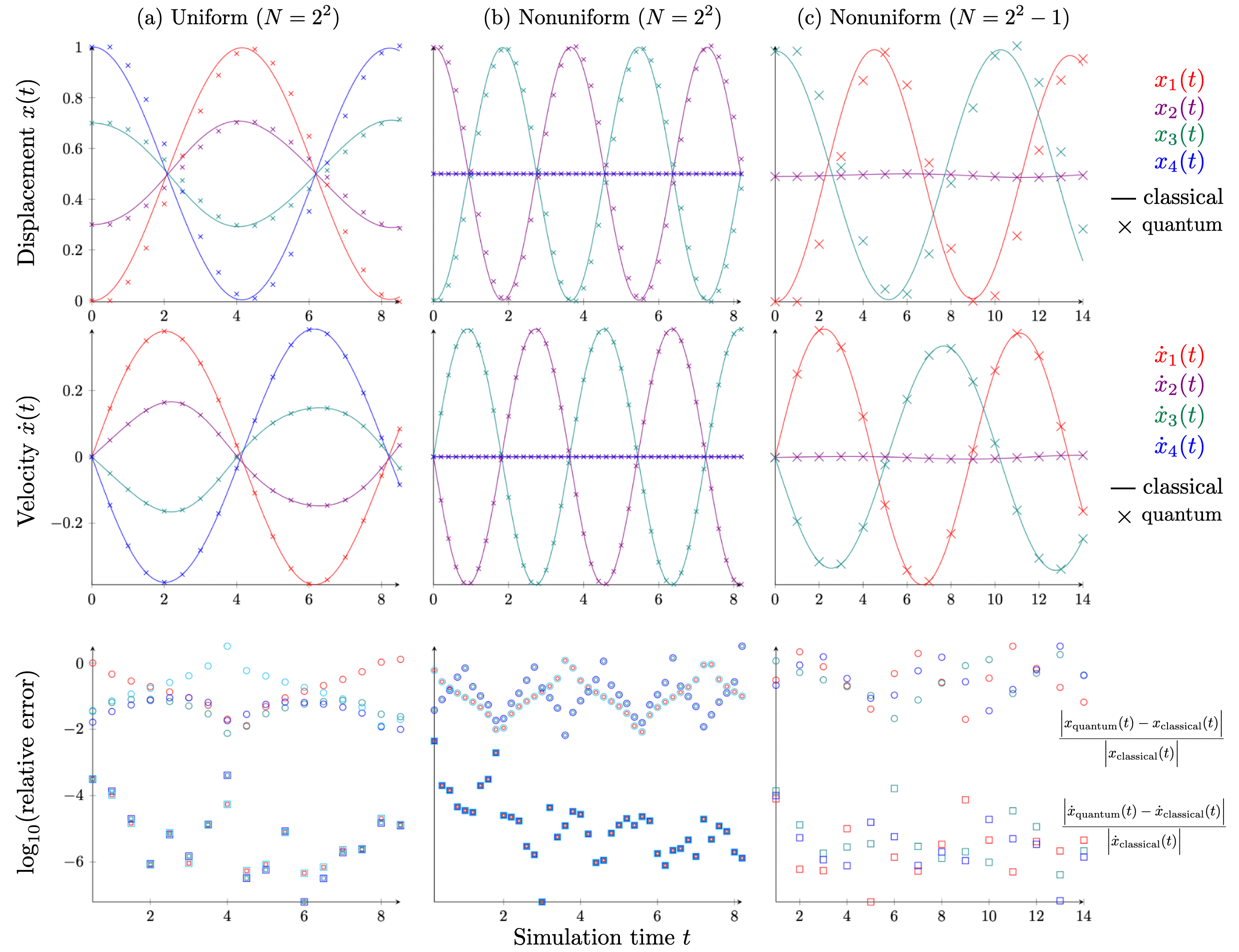}
		\caption{{\bf Quantum–classical comparisons of displacement (top row), velocity (middle row), and logarithmic relative error (bottom row) for three example systems:} 
			(a)~A \emph{uniform} chain of four oscillators (all masses and springs set to~1) with open boundaries and sampled from \(t_i=0\) to \(t_f=8.5\) at \(\Delta t=0.5\). 
			(b)~A \emph{nonuniform} four-oscillator chain in which the first and last masses (\(m_1=m_4=99999\)) effectively act as walls, leaving two light interior masses \((m_2,m_3=1)\) vibrating. Here, the time evolution runs to \(t=8.2\) in increments of \(\Delta t=0.2\). 
			(c)~A \emph{nonuniform} three-oscillator system \((N=3\neq2^n)\) with one large central mass \((m_2=100)\) and two lighter end masses \((m_1=1,m_3=2)\), simulated until \(t=14\) in steps of \(\Delta t=1.0\). 
			In the top and middle row, colored lines denote classical Runge–Kutta solutions, while crosses represent the quantum results obtained from our QSVT-based time-evolution circuit at an approximation error \(\epsilon=0.01\). In the bottom row, we plot the relative errors for both displacement (\(\circ\)) and velocity (\(\square\)) in $\log_{10}$ scale. The velocity error consistently remains near \(\mathcal{O}(10^{-5})\), whereas displacements show slightly larger discrepancies (\(\mathcal{O}(10^{-1})\)) due to the finite-difference post-processing. Reducing \(\Delta t\) can further improve the displacement accuracy (see Section~\ref{sec:results_exp} for the discussion).}
		\label{fig:result}
	\end{figure*}

	In this section, we present numerical demonstrations of our proposed quantum algorithm using the Python library \texttt{Qiskit}~\cite{Qiskit}. We prepare the initial state vectors via the built-in state-preparation gates in \texttt{Qiskit} and retrieve the final statevector using the \texttt{Statevector} class. Note that, in practice, one must perform quantum state tomography to obtain a complete statevector on hardware, which typically costs some constants to the power of $N$ (see Ref. \cite{Stricker_2022}.) The examples below assume an idealized simulator for clarity.
	
	Because the matrix \(B\) (cf.\ Section~\ref{sec:CCO}) is often singular, we can reliably extract only the \emph{velocity} degrees of freedom \(\dot{\vec{x}}(t)\) directly from the measured statevector. To recover the displacement \(\vec{x}(t)\), we employ a finite-difference method, yielding truncation errors of order \(\mathcal{O}(\epsilon\,\Delta t)\) in displacement (where \(\epsilon\) is the QSVT approximation error and \(\Delta t\) is the time step). Alternatively, one could seek to implement an \emph{end-to-end} circuit that uses a Moore--Penrose pseudoinverse of \(B\) to recover directly the displacement vector, bypassing finite-differences.
	
	\subsubsection{Unit masses and springs (\(N=4\))}
	\label{sss:SR_unit}
	We first simulate a chain of four oscillators (\(j=1,2,3,4\)) with unit masses and springs and open-boundary conditions (i.e.\ no coupling between the first and last oscillator). Initially, the displacements are
	\[
	\vec{x}(0) 
	\;=\;
	\begin{pmatrix}
		0\\
		0.3\\
		0.7\\
		1.0
	\end{pmatrix},
	\quad
	\dot{\vec{x}}(0)
	\;=\;
	\vec{0},
	\]
	with the physical time ranging from \(t_i = 0\) to \(t_f = 8.5\), sampled in increments of \(\Delta t=0.5\). The QSVT truncation error is set to \(\epsilon=0.01\), while the classical loop terminates if the residual drops below \(t_{\mathrm{tol}}=10^{-5}\).
	
	Figure~\ref{fig:result} compares the displacements and velocities from the quantum simulation versus a classical Runge--Kutta solver. We quantify agreement by a logarithmic relative error. The relative velocity error hovers around \(10^{-5}\), an extremely small mismatch, whereas the displacement error remains closer to \(10^{-1}\). This discrepancy follows naturally from the finite-difference approach used to recover \(\vec{x}(t)\). Reducing \(\Delta t\) lowers that error.
	
	\subsubsection{Nonuniform system (\(N=2^2=4\)) with ``walls''}
	\label{sss:SR_gen_2n}
	Next, we consider again an open chain of four oscillators, but set the end masses to \(m_1=m_4=99999\) and the inner masses \(m_2=m_3=1\). Effectively, the outer oscillators act like very large walls, yielding negligible motion. We keep the spring constants uniform (all~1) and initialize the inner oscillators with antisymmetric displacements: \(x_2(0)=+1\) and \(x_3(0)=-1\). Simulations proceed up to \(t=8.2\) in steps of \(\Delta t=0.2\), with a QSVT error \(\epsilon=0.01\). Figure~\ref{fig:result} illustrates the outcome. As before, the relative error in velocity lies around \(10^{-5}\), while displacement errors are around \(10^{-1}\). 
	
	\subsubsection{Nonuniform system (\(N=3 \neq 2^n\))}
	\label{sss:SR_gen_NOT2n}
	Finally, we consider a three-oscillator system \(j=1,2,3\) with masses \(m_1=1\), \(m_2=100\), \(m_3=2\). We choose \(k_{(1,2)}=0.5\), \(k_{(2,3)}=0.75\), and \(k_{(3,1)}=0\). Setting \(k_{(3,1)}=0\) specifies an open boundary between oscillator~3 and oscillator~1. The large middle mass serves as a wall, while the other two masses vibrate at the ends. We initialize 
	\[
	\vec{x}(0)
	\;=\;
	\begin{pmatrix}
		-1\\[3pt]
		0\\[3pt]
		+1
	\end{pmatrix},
	\quad
	\dot{\vec{x}}(0)
	\;=\;\vec{0},
	\]
	then evolve out to \(t=8\) in increments of \(\Delta t=0.2\) with a QSVT precision of \(\epsilon=10^{-2}\). As shown in Fig.~\ref{fig:result}, the displacements again exhibit relative errors of roughly \(10^{-1}\), while the velocities are accurate to about \(10^{-5}\). 
	
	These small-scale tests confirm that the quantum algorithm faithfully reproduces classical trajectories, with displacement errors mainly arising from the finite-difference post-processing. Decreasing the time-step \(\Delta t\) decreases these errors by refining the finite-difference estimate of the oscillator displacements. Furthermore, if a block encoding of \((B^\dagger)^{-1}\) is implementable, one can apply it directly to the initial state \(\ket{\psi_0}\), circumventing the need for any subsequent post-processing of \(\vec{x}(t)\). We leave this explicit block encoding of $B^\dagger$ for future work.
	
	
	\section{Conclusion and Outlook}
	\label{sec:conc}
	
	We have presented a concrete quantum-circuit construction for simulating the dynamics of one-dimensional classical coupled-oscillator networks. Our methodology combines block encoding, quantum singular-value transformation (QSVT), and robust oblivious amplitude amplification (ROAA) to implement the time-evolution operator \(e^{-i\mathcal{H}t}\) of the system’s Hamiltonian. Our approach applies to any number \(N\) of oscillators. Although our scheme initially requires masses and spring constants in \([1,\infty)\) and \((0,1]\), respectively, out-of-range values can be accommodated via rescaling at the cost of increased sub-normalization in the block-encoding circuits.
	
	We constructed a block encoding of \(\mathcal{H}\) for general masses and spring constants in \(\mathcal{O}(2\,N\log_2 N + \log_2^2 N)\) gates, which then enables a QSVT-based approximation of the time-evolution operator.  ROAA then amplifies the probability of successfully measuring the correct time-evolved state. These steps together yield an overall gate complexity of \(\mathcal{O}\bigl(\log(1/\epsilon)\,G_{\mathcal{H}}\,(t_f - t_i)/\Delta t\bigr)\) to achieve \(\epsilon\)-accuracy over the time interval \([t_i, t_f]\) with timestep \(\Delta t\), where \(G_{\mathcal{H}}\) is the cost of block encoding \(\mathcal{H}\). Numerical results confirm that the quantum-simulated velocities and displacements agree well with classical integrators, showing typical errors down to \(10^{-5}\) in velocity and \(10^{-1}\) in displacement, due largely to the finite-difference reconstruction of the displacement from the velocity. 
	
	Critically, the gate complexity depends strongly on the cost of block encoding \(\mathcal{H}\). While it scales favorably as \(\mathcal{O}\bigl(\log_2^2 N\bigr)\) for uniform systems, it is \(\mathcal{O}\bigl(2\,N \log_2 N\bigr)\) for general masses and springs systems. Although the qubit count grows only logarithmically with \(N\), scaling to the thermodynamic limit remains gate-intensive for the general system. Future efforts could explore more efficient block encoding of \(\mathcal{H}\) that push complexity toward sublinear or logarithmic scaling, enabling larger-scale quantum simulations.
	
	Although our numerical experiments use an ideal, noise-free simulator (Qiskit~\cite{Qiskit}), hardware realization will require substantial error mitigation or fault tolerance, given the circuit depths associated with QSVT and ROAA. Nonetheless, for sufficiently low-noise devices and \emph{uniform} oscillator parameters, our approach suggests exponential speedups over classical time-stepping. Further refinements could incorporate a QSVT-based pseudoinverse of the matrix \(B^\dagger\) \cite{Gily_n_2019} directly into the quantum circuit, reducing the cost of classical post-processing, as the final state would already encode the required displacements.
	
	Looking forward, a natural direction is to generalize our approach from one-dimensional chains to higher-dimensional oscillator networks. Crucially, the network connectivity governs the incidence matrix \(\Phi_F\), whose block encoding underlies the Hamiltonian simulation. In two- or three-dimensional grids with uniform spring constants, one may exploit the well-known \emph{Kronecker-sum} formulation of discrete Laplacians to build the higher-dimensional analog of the matrix~\(F\). For instance, a 2D grid of size \(M\times N\) admits
	\[
	F_{\mathrm{2D}}
	=
	F_{\mathrm{1D},M}\,\otimes\,I_{N}
	+
	I_{M}\,\otimes\,F_{\mathrm{1D},N},
	\]
	while a 3D grid of size \(M\times N\times P\) can be written as
	\[
	F_{\mathrm{3D}}
	=
	F_{\mathrm{1D},M}\,\otimes\,I_{N}\,\otimes\,I_{P}
	+
	I_{M}\,\otimes\, F_{\mathrm{1D},N}\,\otimes\,I_{P}
	+
	I_{M}\,\otimes\,I_{N}\,\otimes\,F_{\mathrm{1D},P},
	\]
	where \(F_{\mathrm{1D},K}\) is the standard one-dimensional discrete Laplacian on \(K\) nodes. From there, one factorizes \(F_{\mathrm{2D}}\) or \(F_{\mathrm{3D}}\) into an incidence matrix \(\Phi_F\) and diagonal weight factors, analogous to the one-dimensional case. Constructing a block encoding of \(\Phi_F\) thus extends our scheme to multidimensional settings. Beyond spatially extended lattices, one could further consider more general network topologies where a similar Laplacian-based approach may provide an advantage when simulated using quantum circuits. Ultimately, such multidimensional simulations could leverage quantum hardware to tackle computationally heavy tasks (e.g.\ large-scale partial differential equations ~\cite{Jin_liu_comphys_2024, Jin_liu_QST_2024}), highlighting the promise of circuit-based quantum simulation of classical dynamics. We leave a rigorous treatment of these higher-dimensional and general-graph simulations to future work.
	
	\section*{Acknowledgment}
	This work was made possible by a collaboration between Chulalongkorn University and the National University of Singapore in the project "Quantum Computing and Machine Learning for Combinatorial Optimisation" supported by the Singapore Ministry of Education Academic Research Fund (AcRF) Tier 1, grant MOE-T1-251RES2302. N.L. thanks Naphan Benchasattabuse, Patawee Prakrankamanant, and Tosaporn Angsachon for useful discussions. P.S. also thanks Pong Songpongs for useful discussions.  T.C. acknowledges the funding support from the NSRF via the Program Management Unit for Human Resources \& Institutional Development, Research and Innovation [grant number B39G680007].
	
	Lastly, we would like to express our deepest condolences for the passing of Professor Stéphane Bressan. Through his enthusiasm and vision, he tirelessly expanded our collaborative network and enriched our research. His kindness and guidance will remain an inspiration to all who worked with him, and we are grateful for the opportunity to have learned from him. May his memory live on in our continued efforts.
	
	\appendix
	\section{Manipulating Block-Encoded Operators}
	\label{appSec:BE_mani}
	
	In this appendix, we summarize how to construct block-encodings for common algebraic manipulations of operators: tensor products, products, and linear combinations. Throughout, we assume we have existing block-encodings of certain operators \(A\) and \(B\), along with appropriate ancilla registers, and we seek to form new block-encodings (e.g.\ of \(A\otimes B\), \(AB\), or \(\alpha A + \beta B\)). 
	
	\subsection{Tensor products of block-encoded matrices}
	\label{appSSec:BE_tensor}
	
	A block-encoding of \(A \otimes B\) can be assembled by juxtaposing the block-encodings of \(A\) and \(B\) and then swapping the relevant qubit registers to align with the standard block-encoding format. The following proposition is adapted from Lemma~1 in \cite{Camps_2020}.
	
	\begin{proposition}\label{prop:BE_kron}
		Let \(U_A\) be an \((\alpha,a,\epsilon_A)\)-block-encoding of an \(s\)-qubit operator \(A\) (realized at cost \(T_A\)), and let \(U_B\) be a \((\beta,b,\epsilon_B)\)-block-encoding of a \(t\)-qubit operator \(B\) (realized at cost \(T_B\)). Then
		\[
		U_{A\otimes B}
		\;=\;
		S\,\bigl(U_A \otimes U_B\bigr)\,S^\dagger
		\]
		is an \((\alpha \beta,\;a+b,\;\alpha\,\epsilon_B + \beta\,\epsilon_A + \epsilon_A\epsilon_B)\)-block-encoding of \(A \otimes B\). Here, \(S\) is defined as 
		\begin{equation}\label{eq:BE_Kron_S}
			S \;=\; \prod_{i=1}^{s} \mathrm{SWAP}_{\,a+i\,}^{\,a+b+i\,}
		\end{equation}
		where each \(\mathrm{SWAP}_{\,a+i\,}^{\,a+b+i\,}\) gate exchanges qubits \((a+i)\) and \((a+b+i)\). The circuit cost of \(U_{A\otimes B}\) is \(\mathcal{O}(T_A + T_B)\).
	\end{proposition}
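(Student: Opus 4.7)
The plan is to decompose the claim into two independent checks: (i) that conjugation by $S$ correctly re-orders qubit registers so that the top-left block of $U_{A\otimes B}$ is $\tilde{A}\otimes\tilde{B}$, where $\tilde{A}:=(\bra{0^a}\otimes I_s)U_A(\ket{0^a}\otimes I_s)$ and similarly for $\tilde{B}$; and (ii) a triangle-inequality norm estimate propagating the individual errors $\epsilon_A,\epsilon_B$ to the composite error $\alpha\epsilon_B+\beta\epsilon_A+\epsilon_A\epsilon_B$. Unitarity of $U_{A\otimes B}$ is automatic, since $S$ is a product of SWAP gates and $U_A\otimes U_B$ is the tensor product of unitaries.

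For step (i), I would first fix the qubit-ordering convention: the wires of $U_A\otimes U_B$, read top to bottom, sit in the order [ancilla-$A$, signal-$A$, ancilla-$B$, signal-$B$], which does not match the block-encoding layout [all ancillas, all signals]. The role of the SWAP chain $S$ in \eqref{eq:BE_Kron_S} is precisely to interleave the signal-$A$ register past the ancilla-$B$ register. I would verify on basis vectors that $S^\dagger\bigl(\ket{0^a}\ket{0^b}\otimes\ket{\psi_A}\ket{\psi_B}\bigr)=\ket{0^a}\ket{\psi_A}\ket{0^b}\ket{\psi_B}$ for any product state $\ket{\psi_A}\otimes\ket{\psi_B}$ on the signal register (and extend by linearity), and then conclude
\[
(\bra{0^{a+b}}\otimes I_{s+t})\,U_{A\otimes B}\,(\ket{0^{a+b}}\otimes I_{s+t}) \;=\; \tilde{A}\otimes\tilde{B}.
\]

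For step (ii), I would write $A=\alpha\tilde{A}+E_A$ and $B=\beta\tilde{B}+E_B$ with $\|E_A\|_2\le\epsilon_A$, $\|E_B\|_2\le\epsilon_B$, expand the tensor product and cancel the leading term:
\[
A\otimes B - \alpha\beta\,\tilde{A}\otimes\tilde{B} \;=\; \alpha\,\tilde{A}\otimes E_B \;+\; E_A\otimes\beta\,\tilde{B} \;+\; E_A\otimes E_B.
\]
Applying the triangle inequality together with $\|X\otimes Y\|_2=\|X\|_2\,\|Y\|_2$, and the fact that $\|\tilde{A}\|_2,\|\tilde{B}\|_2\le 1$ (inherited from unitarity of $U_A,U_B$), yields the claimed bound. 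The gate-count statement $\mathcal{O}(T_A+T_B)$ then follows because $S$ contributes only $\mathcal{O}(s)$ SWAPs (and $S^\dagger$ the same), which is dominated by the cost of $U_A$ in any nontrivial block-encoding.

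The main subtlety I expect is the careful bookkeeping of the qubit permutation in step (i): the indexing convention of the SWAPs in \eqref{eq:BE_Kron_S} must be matched precisely against the chosen tensor-factor ordering of $U_A\otimes U_B$, and a single off-by-one error here would break the identification of the top-left block and invalidate the clean expression $\tilde{A}\otimes\tilde{B}$. Once that permutation is pinned down, the remaining norm estimate is a one-line expansion plus submultiplicativity of the spectral norm on tensor products, and the gate-complexity statement is immediate.
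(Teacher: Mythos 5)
Your proposal is correct in substance, but it takes a genuinely different route from the paper: the paper does not prove this proposition at all, it simply defers to Lemma~1 of Camps and Van Beeumen for the SWAP/interleaving construction and to Lemma~21 of Chakraborty for the $\mathcal{O}(T_A+T_B)$ cost, whereas you supply a self-contained argument. Your step~(ii) is complete and matches the stated error exactly: writing $A=\alpha\tilde A+E_A$, $B=\beta\tilde B+E_B$, expanding, and using $\|X\otimes Y\|_2=\|X\|_2\|Y\|_2$ together with $\|\tilde A\|_2,\|\tilde B\|_2\le 1$ (submatrices of unitaries) gives precisely $\alpha\epsilon_B+\beta\epsilon_A+\epsilon_A\epsilon_B$; this is the content the paper leaves implicit in the citation. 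The one piece you describe but do not execute is the basis-state verification in step~(i), and you are right that this is where the delicacy lies: for $s>1$ the transpositions $\mathrm{SWAP}_{a+i}^{a+b+i}$ overlap (e.g.\ $s=2$, $b=1$ gives $\mathrm{SWAP}_{a+1}^{a+2}$ and $\mathrm{SWAP}_{a+2}^{a+3}$), so they do not commute and the product in~\eqref{eq:BE_Kron_S} realizes the intended cyclic interchange of the signal-$A$ and ancilla-$B$ blocks only for one ordering of the factors; a completed proof must fix that ordering and check it, exactly as you anticipate. With that verification supplied, your argument stands on its own and is arguably more informative than the paper's proof-by-reference; what the paper's route buys is brevity and delegation of the permutation bookkeeping to a source where it is already worked out.
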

	
	\begin{proof}[Proof]
		See Lemma~1 in Ref. \cite{Camps_2020} for how interleaving \(\mathrm{SWAP}\) gates and the respective block-encodings yields the desired \((\alpha\beta)\)-subnormalized block. The cost argument of \(\mathcal{O}(T_A + T_B)\) follows from Lemma~21 in Ref. \cite{Chakraborty2023}.
	\end{proof}
	
	\subsection{Products of block-encoded matrices}
	\label{appSSec:BE_prod}
	
	To multiply two operators \(A\) and \(B\) (each already block-encoded), one can ``stretch'' the ancillas so that the product \(AB\) emerges in the top-left block. Formally:
	
	\begin{proposition}\label{def:BE_prod}
		Let \(U_A\) be an \((\alpha,a,\epsilon_A)\)-block-encoding of an \(s\)-qubit operator \(A\) (cost \(T_A\)) and \(U_B\) be a \((\beta,b,\epsilon_B)\)-block-encoding of an \(s\)-qubit operator \(B\) (cost \(T_B\)). Then,
		\[
		U_{AB}
		\;=\;
		\bigl(I^{\otimes b}\!\otimes U_A\bigr)\;\bigl(I^{\otimes a}\!\otimes U_B\bigr)
		\]
		is an \(\bigl(\alpha\beta,\;a+b,\;\alpha\,\epsilon_B \;+\;\beta\,\epsilon_A\bigr)\)-block-encoding of \(AB\) and can be implemented at cost \(\mathcal{O}(T_A + T_B)\).
	\end{proposition}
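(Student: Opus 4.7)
The plan is in three steps: (1) confirm $U_{AB}$ is unitary, (2) establish the exact algebraic identity that the top-left block of $U_{AB}$ factors as the product of the top-left blocks of $U_A$ and $U_B$, and (3) bound the error by a single application of the triangle inequality. Unitarity is immediate, since $U_A$ and $U_B$ are unitaries, tensoring with identities preserves unitarity, and products of unitaries are unitary. I implicitly assume a qubit-reordering convention in which the $a$ ancillas of $U_A$ and the $b$ ancillas of $U_B$ occupy disjoint registers while both factors act on the same $s$-qubit signal register.

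The main step is the block identity. Writing $\tilde{A} := (\bra{0^a}\otimes I_s)U_A(\ket{0^a}\otimes I_s)$ and $\tilde{B} := (\bra{0^b}\otimes I_s)U_B(\ket{0^b}\otimes I_s)$, so that by hypothesis $\|A-\alpha\tilde{A}\|_2\le\epsilon_A$ and $\|B-\beta\tilde{B}\|_2\le\epsilon_B$, the claim is
\[(\bra{0^{a+b}}\otimes I_s)\,U_{AB}\,(\ket{0^{a+b}}\otimes I_s) \;=\; \tilde{A}\,\tilde{B}.\]
I would verify this by applying $I^{\otimes a}\otimes U_B$ to $\ket{0^a}\ket{0^b}\ket{\psi}$ and splitting the result into a ``good'' component $\ket{0^a}\ket{0^b}(\tilde{B}\ket{\psi})$ plus a piece supported on states of the $B$-ancilla orthogonal to $\ket{0^b}$. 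Applying $I^{\otimes b}\otimes U_A$ and splitting again with respect to $\ket{0^a}$ on the $A$-ancilla, the components orthogonal to either ancilla register are annihilated when we subsequently project onto $\ket{0^{a+b}}$, leaving $\tilde{A}\tilde{B}\ket{\psi}$.

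For the error bound, I would use the standard product-of-errors splitting
\[AB \,-\, \alpha\beta\,\tilde{A}\,\tilde{B} \;=\; (A - \alpha\tilde{A})\,B \;+\; \alpha\,\tilde{A}\,(B - \beta\tilde{B}),\]
apply submultiplicativity, and invoke $\|\tilde{A}\|_2 \le 1$ (as $\tilde{A}$ is a compression of a unitary) along with $\|B\|_2 \le \beta$ (up to the encoding slack inherited from Definition~\ref{def:BE}) to obtain the claimed bound $\alpha\epsilon_B + \beta\epsilon_A$, absorbing any $\mathcal{O}(\epsilon_A\epsilon_B)$ cross term into the leading order. The cost claim $\mathcal{O}(T_A+T_B)$ is immediate from the circuit structure, since $U_B$ is applied once followed by $U_A$ once, and tensoring with identity is free.

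The main obstacle is not any individual estimate but the implicit register bookkeeping hidden in the notation $I^{\otimes b}\otimes U_A$ and $I^{\otimes a}\otimes U_B$: one must arrange qubit orderings (or insert SWAPs, in the spirit of Proposition~\ref{prop:BE_kron}) so that each unitary's ancilla register sits in its correct slot and the projection onto $\ket{0^{a+b}}$ factors cleanly as $\bra{0^a}$ on the $A$-ancilla times $\bra{0^b}$ on the $B$-ancilla. Once this convention is fixed, the remainder reduces to the short algebraic identity in Step~2 plus a single triangle-inequality step.
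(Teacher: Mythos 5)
Your argument is correct and is essentially the standard proof of this fact; the paper itself does not prove the proposition but simply cites Lemma~53 of Gily\'en \emph{et al.}, whose proof proceeds exactly as you describe (block factorization of the projected product, then the telescoping split $AB-\alpha\beta\tilde A\tilde B=(A-\alpha\tilde A)B+\alpha\tilde A(B-\beta\tilde B)$ with $\|\tilde A\|_2\le 1$). Your handling of the $\mathcal{O}(\epsilon_A\epsilon_B)$ cross term, which arises because Definition~\ref{def:BE} only guarantees $\|B\|_2\le\beta+\epsilon_B$, matches the cited reference's (slightly loose) convention, so there is nothing further to fix.
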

	
	\begin{proof}[Proof]
		See Lemma~53 in Ref. \cite{Gily_n_2019}.
	\end{proof}
	
	\subsection{Linear combinations of unitaries (LCU)}
	\label{appSSec:BE_LCU}
	
	When we wish to form a linear combination \(\sum_{j}\gamma_j \,A_j\) of operators \(\{A_j\}\) that can be block-encoded, we can use a standard \emph{LCU construction} \cite{Gily_n_2019} that appends an ancilla register in a superposition weighted by \(\{\gamma_j\}\). The relevant coefficients \(\{\gamma_j\}\) appear via an explicit state-preparation circuit:
	
	\begin{definition}[State Preparation]\label{def:BE_statePrep}
		Let \(\vec{\gamma}\in\mathbb{C}^m\) with \(\|\vec{\gamma}\|_\infty\le \beta\). A pair of unitaries \((P_b,Q_b)\) is called a \((\beta,b,\epsilon)\)-\emph{state-preparation} for \(\vec{\gamma}\) if
		\[
		P_b \ket{0^b}
		\;=\;
		\sum_{j=0}^{2^b-1} c_j \ket{j},
		\quad
		Q_b \ket{0^b}
		\;=\;
		\sum_{j=0}^{2^b-1} d_j \ket{j},
		\]
		such that 
		\[
		\sum_{j=0}^{2^b-1}
		\bigl|
		\beta\,c_j^*\,d_j \;-\; \gamma_j
		\bigr|
		\;\le\;
		\epsilon
		\]
		and \(2^b\ge m\). Intuitively, \(\{c_j\}\) and \(\{d_j\}\) approximate the amplitudes \(\{\gamma_j/\beta\}\).
	\end{definition}
	
	By preparing such states, one can implement a block-encoding of a \emph{linear combination} of the \(\{A_j\}\) as follows.
	
	\begin{proposition}[LCU]\label{def:BE_LCU}
		Let \(B = \sum_{j=0}^{m-1} \gamma_j \,A_j\) be a linear combination of \(m\) \(s\)-qubit operators, each of which has an \((\alpha_j,a_j,\epsilon_j)\)-block-encoding \(U_{A_j}\) implemented at cost \(T_j\). Suppose we can prepare a corresponding \((\beta,b,\epsilon)\)-state-preparation of the coefficients \(\{\gamma_j\}\). Then, via the unitary
		\begin{equation}\label{eq:BE_linCom_main}
			U_B
			\;=\;
			\bigl(P_b^\dagger \!\otimes I_{\max_j a_j}\otimes I_s \bigr)\,
			W\,
			\bigl(Q_b \otimes I_{\max_j a_j}\otimes I_s\bigr),
		\end{equation}
		where
		\begin{equation}\label{eq:BE_linCom_W}
			W
			\;=\;
			\sum_{j=0}^{m-1}\!\ket{j}\!\bra{j}
			\otimes U_{A_j}
			\;+\;
			\sum_{j=m}^{2^b-1}
			\!\ket{j}\!\bra{j}
			\otimes I_{\,\max_j a_j+s},
		\end{equation}
		we obtain a \(\Bigl(\!\sum_j \gamma_j\, \alpha_j, \max_j a_j + b,\; \sum_j \gamma_j\,\epsilon_j + \epsilon\Bigr)\)-block-encoding of \(B\), realized at cost \(\mathcal{O}\!\bigl(\sum_j T_j + T_{PQ}\bigr)\), where \(T_{PQ}\) is the cost of state-preparation.
	\end{proposition}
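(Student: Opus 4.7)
The plan is to verify Proposition~\ref{def:BE_LCU} by plugging the explicit circuit $U_B$ from \eqref{eq:BE_linCom_main} into Definition~\ref{def:BE} directly: compute $(\langle 0^{a+b}|\otimes I_s)\,U_B\,(|0^{a+b}\rangle\otimes I_s)$, multiply by the advertised sub-normalization $\alpha_B = \sum_j \gamma_j \alpha_j$, and bound the spectral-norm deviation from $B = \sum_j \gamma_j A_j$ by the claimed $\sum_j \gamma_j \epsilon_j + \epsilon$. Structurally, this is a direct circuit-tracking calculation followed by a triangle-inequality error split, together with a routine gate-count argument for the multiplexer.

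Concretely, I would propagate $|0^b\rangle|0^{\max_j a_j}\rangle|\psi\rangle$ through each stage of \eqref{eq:BE_linCom_main}. After $Q_b \otimes I$, the index register holds $\sum_j d_j |j\rangle$. Applying the select operator $W$ in \eqref{eq:BE_linCom_W} then invokes the padded $U_{A_j}$ on the remaining register when $j<m$ and acts trivially when $j\geq m$; the padding with $\max_k a_k - a_j$ identity wires initialized to $|0\rangle$ preserves the block-encoding property because those wires remain in $|0\rangle$ throughout. Applying $P_b^\dagger$ and projecting onto $|0^b\rangle|0^{\max_j a_j}\rangle$ uses the adjoint identity $\langle 0^b|P_b^\dagger = \sum_k c_k^* \langle k|$ together with the block-encoding identity $(\langle 0^{a_j}|\otimes I_s)\,U_{A_j}\,(|0^{a_j}\rangle\otimes I_s) = A_j/\alpha_j + E_j$, where $\|\alpha_j E_j\|_2 \leq \epsilon_j$. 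Collecting everything, the top-left block evaluates to $\sum_{j=0}^{m-1} c_j^* d_j (A_j/\alpha_j + E_j)$. Substituting the state-preparation bound $\beta c_j^*d_j = \gamma_j + \delta_j$ with $\sum_j|\delta_j|\leq \epsilon$ and rescaling by $\alpha_B$, I would decompose the difference $B - \alpha_B\cdot[\text{top-left block}]$ into an operator-approximation contribution bounded by $\sum_j \gamma_j \epsilon_j$ and a state-preparation contribution bounded by $\epsilon \cdot \max_j \|A_j/\alpha_j\|_2 \leq \epsilon$. The gate count $\mathcal{O}(\sum_j T_j + T_{PQ})$ then follows because $W$ is implemented as an index-controlled multiplexer invoking each $U_{A_j}$ exactly once, with $P_b,Q_b$ contributing $T_{PQ}$.

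The main obstacle I expect is the error bookkeeping in the last step. A naive triangle inequality generates cross terms of order $\gamma_j \epsilon_j \epsilon$ that do not appear in the stated bound; absorbing them requires invoking the implicit constraint $\|A_j\|_2 \leq \alpha_j + \epsilon_j$ from Definition~\ref{def:BE} together with $\|A_j/\alpha_j\|_2 \leq 1$, so that the state-preparation contribution stays linear in $\epsilon$ and the block-encoding contribution linear in $\epsilon_j$. A secondary, purely notational subtlety is reconciling the sub-normalization $\sum_j \gamma_j \alpha_j$ with a state-preparation for $\{\gamma_j\}$ rather than $\{\gamma_j \alpha_j\}$; the cleanest fix is to reinterpret Definition~\ref{def:BE_statePrep} as applying to the rescaled coefficients $\{\gamma_j \alpha_j\}$, which is the form actually used downstream in Section~\ref{subsec:BE_eiht}. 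A complete write-up along these lines appears as Lemma~52 of Ref.~\cite{Gily_n_2019}.
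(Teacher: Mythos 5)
Your proposal is correct and follows essentially the same route as the paper, which offers no proof of its own but simply defers to Lemma~52 of Ref.~\cite{Gily_n_2019}; your write-up is a faithful reconstruction of that lemma's argument (propagation through $Q_b$, the select operator $W$, and $P_b^\dagger$, followed by a triangle-inequality error split and the multiplexer gate count). Your two flagged subtleties are both real and correctly resolved: the sub-normalization $\sum_j\gamma_j\alpha_j$ indeed only matches a state-preparation of the rescaled coefficients $\{\gamma_j\alpha_j\}$ unless all $\alpha_j$ coincide, and the cleanest way to eliminate the cross terms is to bound the state-preparation contribution via $\bigl\|(\bra{0^{a_j}}\otimes I_s)\,U_{A_j}\,(\ket{0^{a_j}}\otimes I_s)\bigr\|_2\le 1$ (a sub-block of a unitary) rather than via $\|A_j/\alpha_j\|_2\le 1$, which Definition~\ref{def:BE} does not quite guarantee.
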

	\begin{figure*}[ht!]
		\centering
		\includegraphics[width = 0.8\textwidth]{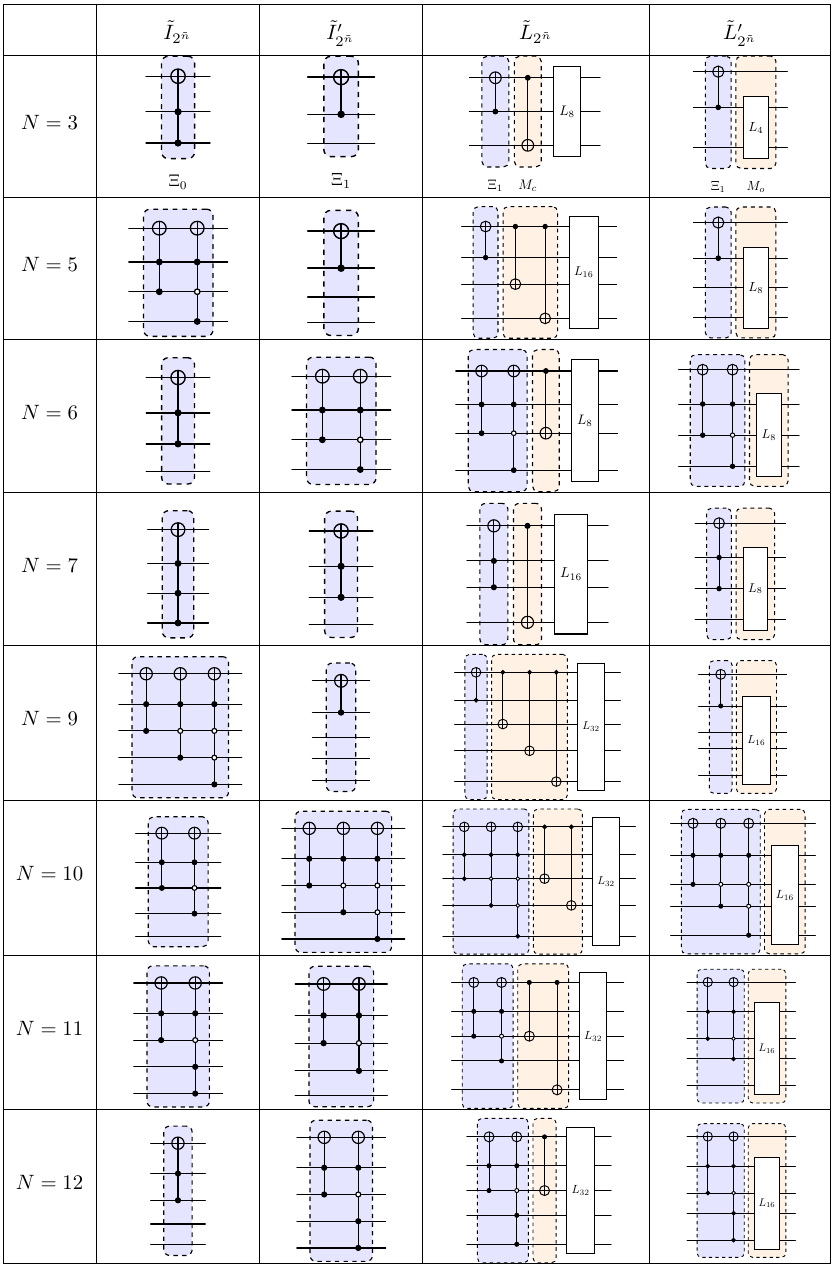}
		\caption{{\bf Examples of block-encoding circuits for \(\widetilde{I}_{2^{\tilde{n}}}\), \(\widetilde{I}'_{2^{\tilde{n}}}\), \(\widetilde{L}_{2^{\tilde{n}}}\), and \(\widetilde{L}'_{2^{\tilde{n}}}\) across a range of system sizes \(N\in\{3,5,6,7,9,10,11,12\}\).} Here \(\tilde{n} \coloneq \lceil \log_2 N\rceil\) denotes the number of index qubits.  We omit the additional LCU components (including state-preparation gates and control logic) for clarity, focusing solely on the core block-encoding subcircuits.}
		\label{fig:exam_circ_IL_BE}
	\end{figure*}
	\begin{proof}[Proof]
		See Lemma~52 in \cite{Gily_n_2019} for details. 
	\end{proof}
	
	In practice, one picks an integer \(b\) large enough that \(2^b \ge m\), then constructs unitaries \(P_b\) and \(Q_b\) satisfying the state-preparation conditions of Definition~\ref{def:BE_statePrep} (i.e., embedding the coefficient vector \(\vec{y}\) with total error at most \(\epsilon\)). The operator \(W\) in ~\eqref{eq:BE_linCom_W} can be understood as a \emph{multiplexed} selection of the block-encodings \(\{U_{A_j}\}\), each triggered by a distinct bit-string in the ancilla register. Concretely, one implements a series of (multi-)controlled gates, where each qubit in the ancilla register either serves as a ``control'' or an ``open-control'' (for logical~0), using a big-endian encoding of~\(j\). This design ensures that, conditioned on the ancillas being in state~\(\ket{j}\), the circuit applies precisely \(U_{A_j}\) on the signal and the remaining ancillas, thereby effecting the linear combination \(\sum_{j}\gamma_j\,A_j\) in the top-left block.

	\begin{figure*}[!ht]
		\centering
		\includegraphics[height=2.8in]{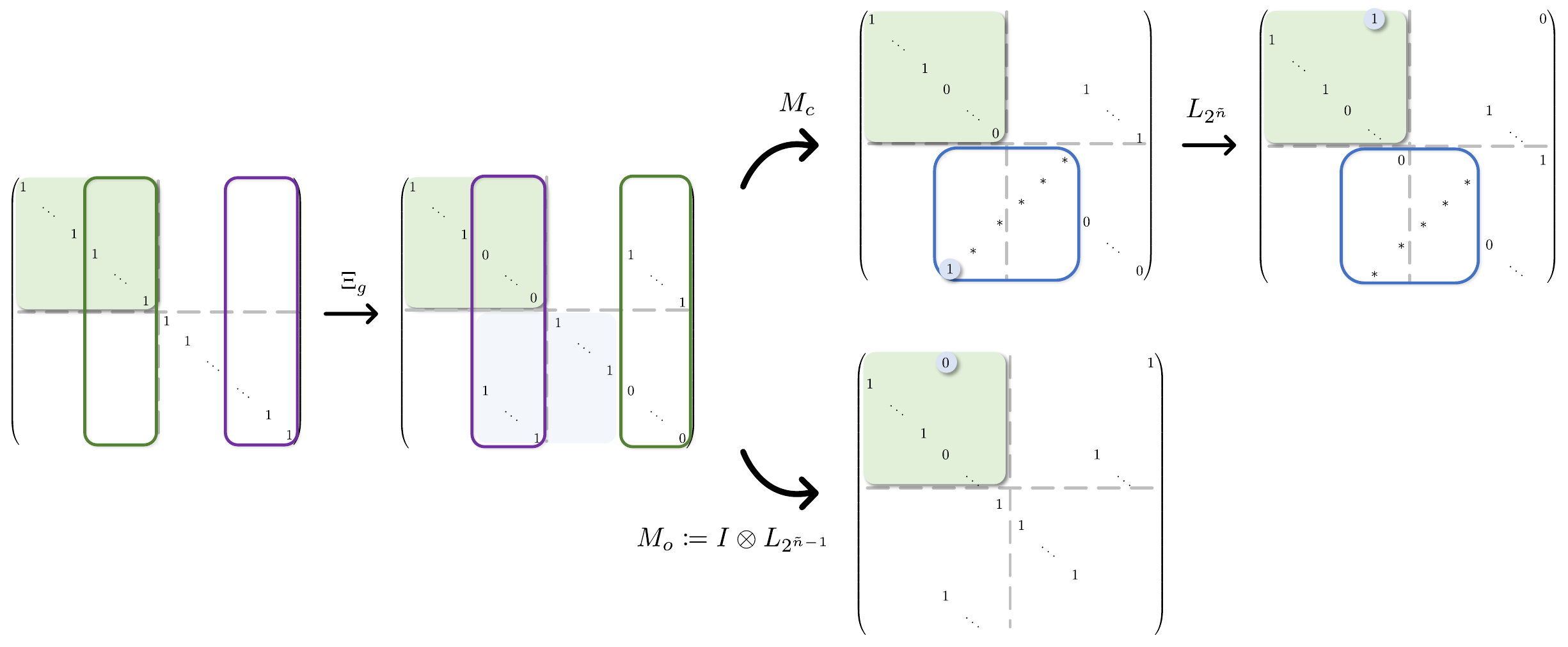}
		\caption{{\bf Matrix diagram illustrating how $\Xi_{g}$ and $M_{c,o}$ rearrange block-encoding 
				entries.} The gate $\Xi_g$ moves excess “1” entries out of the main diagonal block for both cases. For closed-chained cases, the gate $M_{c}$ repositions these “1”s near the boundary so that after $L$-shift, $\tilde{L}$ is correctly 
			realized as a block-encoding of size $2^{\tilde{n}} \!\times\! 2^{\tilde{n}}$. For open-chained cases, the gate $M_{o}$ shifts four quadrants independently so that the excess “1” will not appear in the block encoding of $\tilde{L}'$}
		\label{circ:matrixDiagram}
	\end{figure*}
	
	\begin{figure*}[!ht]
		\centering
		\subfloat[]{%
			\begin{adjustbox}{width=2.8in}
				\begin{quantikz}
					\lstick{$\ket{0}$}	&\gate{H}	&\ctrl[open]{1}		&\ctrl{1}	&\ctrl{1}	&\ctrl{1} &\gate{H}	&\\
					\lstick{$\ket{0}$}	&			&\gate[2]{\Xi_{g}}\gategroup[2, steps=1, style={dashed, rounded corners, fill = blue!10, inner xsep = 2pt}, background, label style={label position = below, anchor = north, yshift=-0.4cm}]{$U_{\tilde{I}_{2^{\tilde{n}}}}$}	&\gate[2]{\Xi_{g}}\gategroup[2, steps=3, style={dashed, rounded corners, fill = blue!10, inner xsep = 2pt}, background, label style={label position = below, anchor = north, yshift=-0.4cm}]{$U_{\tilde{L}_{2^{\tilde{n}}}}$}	&\gate[2]{M_c}&\gate[2]{L_{2^{\tilde{n}+1}}} &		&\\
					\lstick{$\ket{\psi}$}	&\qwbundle{\tilde{n}}&&&&&&
				\end{quantikz}
			\end{adjustbox}
		}
		\subfloat[]{%
			\begin{adjustbox}{width=2.5in}
				\begin{quantikz}
					\lstick{$\ket{0}$}	&\gate{H}	&\ctrl[open]{1}		&\ctrl{1}	&\ctrl{2} &\gate{H}	&\\
					\lstick{$\ket{0}$}	&			&\gate[2]{\Xi_{g}}\gategroup[2, steps=1, style={dashed, rounded corners, fill = blue!10, inner xsep = 2pt}, background, label style={label position = below, anchor = north, yshift=-0.4cm}]{\textcolor{blue}{$U_{\tilde{I'}_{2^{\tilde{n}}}}$}}	&\gate[2]{\Xi_{g}}\gategroup[2, steps=2, style={dashed, rounded corners, fill = blue!10, inner xsep = 2pt}, background, label style={label position = below, anchor = north, yshift=-0.4cm}]{\textcolor{blue}{$U_{\tilde{L'}_{2^{\tilde{n}}}}$}}	 &\gategroup[2, steps=1, style={dashed, rounded corners, fill = red!10, inner xsep = 0.01pt}, background, label style={label position = above, anchor = north, yshift=-0.15cm, xshift=0.28cm}]{\textcolor{red!80}{$M_o$}}		& &\\
					\lstick{$\ket{\psi}$}	&\qwbundle{\tilde{n}}&&&\gate[1]{L_{2^{\tilde{n}}}}&&
				\end{quantikz}
			\end{adjustbox}
		}
		\caption{ {\bf Panels (a) and (b) illustrate $(2,2,0)$--block-encoding circuits 
				for $\tilde{\Phi}_F^{c}$ (closed-chained) and $\tilde{\Phi}_F^{o}$ (open-chained), 
				respectively.} In both cases, the operator $\Xi_{g}$ removes unwanted ones from the top-left block, which is used for every partial blocks of $\tilde{I}_{2^{\tilde{n}}} (\tilde{I}_{2^{\tilde{n}}}')$ and $\tilde{L}_{2^{\tilde{n}}} (\tilde{L}_{2^{\tilde{n}}}')$. For the closed-chained systems, the operator $M_{c}$ adjusts boundary values according to $\mathfrak{C}=1$ so that applying $L_{\tilde{n}+1}$ completes the shift 
			to ensure $\tilde{\Phi}_F$ matches the desired incidence structure. For open-chained systems, 
			$M_{o} \coloneq I \otimes L_{2^{\tilde{n}}}$ is used, with no additional $L$-shift operation required.}
		\label{circ:gen2_phiF}
	\end{figure*}
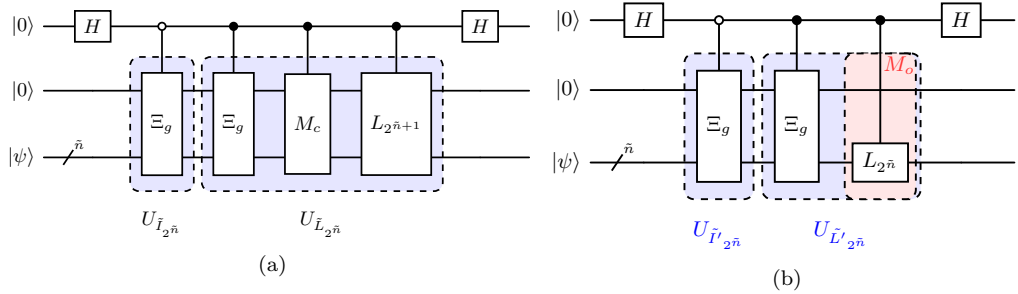
	
	\section{Details of Block Encoding Construction for Generalized Coupled Oscillators}
	\label{app:genMKnoOsc_CCO}
	
	We now relax the condition that the total number of oscillators must be a power of two, 
	allowing $N$ to be any positive integer with $N \neq 2^n$. 
	To accommodate a general $N$, we embed the spring–mass system’s matrix $\mathcal{A}$ (and likewise $F$) 
	into a larger $(2^{\tilde{n}} \!\times\! 2^{\tilde{n}})$ matrix with additional zero rows and columns, 
	where $\tilde{n} \coloneqq \lceil \log_2 N \rceil$. Concretely, we define 
	\begin{equation}\label{eq:A_equiMK_withZero}
		\tilde{\Phi}_F^{c/o} \;=\;
		\begin{pmatrix}
			\mathcal{A}_{N\times N}(\mathfrak{C}) & 0\\
			0&0\\
		\end{pmatrix}_{2^{\tilde{n}} \times 2^{\tilde{n}}},
	\end{equation}
	where $\mathfrak{C} = 1$ (closed-chained, i.e.\ periodic boundary condition) or $0$ (open-chained, i.e. \ open boundary condition), 
	and the extra rows/columns are purely zero. We label such zero-padded matrices by a \emph{tilde}, 
	e.g.\ $\tilde{F}$ or $\tilde{\Phi}_F$, to distinguish them from the original size-$N$ system.
	
	In this padded setting, the incidence matrix $\tilde{\Phi}_F$ (analogous to $\Phi_F$ in 
	Sec.~\ref{sss:genMK_CCO} but extended to $2^{\tilde{n}}$ size) decomposes similarly:
	\begin{equation}\label{NEW_eq:incFcy_decomOG_withZero}
		\tilde{\Phi}_F^{c} 
		\;=\;
		\tilde{I}_{2^{\tilde n}} - \tilde{L}_{2^{\tilde n} }
		\;=\;
		\begin{pmatrix}
			I_N & 0\\
			0 & 0
		\end{pmatrix}_{2^{\tilde{n}} \times 2^{\tilde{n}}}
		\;-\;
		\begin{pmatrix}
			L_N & 0\\
			0 & 0
		\end{pmatrix}_{2^{\tilde{n}} \times 2^{\tilde{n}}},
	\end{equation}
	for a closed-chained system (where $I_N$ and $L_N$ are from~\eqref{eq:decom_cy_B}), 
	and
	\begin{equation}\label{NEW_eq:incFncy_decomOG_withZero}
		\tilde{\Phi}_F^{o} 
		\;=\;
		\tilde{I}'_{2^{\tilde n}} - \tilde{L}'_{2^{\tilde n}}
		\;=\;
		\begin{pmatrix}
			I'_N & 0\\
			0 & 0
		\end{pmatrix}_{2^{\tilde{n}} \times 2^{\tilde{n}}}
		\;-\;
		\begin{pmatrix}
			L'_N & 0\\
			0 & 0
		\end{pmatrix}_{2^{\tilde{n}} \times 2^{\tilde{n}}},
	\end{equation}
	for an open-chained system, where the {\it prime} notation again signifies that 
	the last column is zeroed out as in~\eqref{NEW_eq:incFcy_decomOG_withZero}.
	
	\paragraph*{Circuit construction for block-encoding of \boldmath\(B\).}
	Figures~\ref{circ:BE_diag}--\ref{circ:BE_genB_ncy} (and Algorithm~\ref{alg:cap}) 
	extend directly to these padded matrices: one composes the extended matrices  
	\(\tilde{I}_{2^{\tilde{n}}}\)-, \(\tilde{I}'_{2^{\tilde{n}}}\)-, \(\tilde{L}_{2^{\tilde{n}}}\)-, or \(\tilde{L}'_{2^{\tilde{n}}}\)-block-encodings, 
	using our proposed multi-controlled gates $\Xi_g$ and $M_{o(c)}$ depending on the given boundary conditions. 
	The final step aligns with the same linear-combination-of-unitaries (LCU) approach outlined 
	in the $N=2^n$ scenario.

	Algorithm~\ref{alg:cap} thoroughly provides the step-by-step procedure to implement the four aforementioned extended matrices, leaving out unneeded elements outside the main block. A short summary of this algorithm is provided below.
	
	\paragraph*{Brief Summary of Algorithm \ref{alg:cap}.}
	To implement a block encoding of any operator in
	\(\{\tilde{I}_{2^{\tilde{n}}},\,\tilde{I}'_{2^{\tilde{n}}},\,\tilde{L}_{2^{\tilde{n}}},\,\tilde{L}'_{2^{\tilde{n}}}\}\),
	we follow the steps illustrated in Algorithm~\ref{alg:cap}:
	
	\begin{enumerate}[label=(\roman*)]
		\item \textbf{Block encoding of \(\tilde{I}_{2^{\tilde{n}}}\) or \(\tilde{I}'_{2^{\tilde{n}}}\).}  
		We utilize the proposed \(\Xi_{g}\) gate to shift unwanted ``1'' entries outside the desired block. 
		The choice of \(\Xi_{g}\) depends on the target operator: 
		\(\Xi_{0}\) is dedicated to \(\tilde{I}_{2^{\tilde{n}}}\) alone, 
		while \(\Xi_{1}\) is used for \(\tilde{I}'_{2^{\tilde{n}}},\;\tilde{L}_{2^{\tilde{n}}},\;\tilde{L}'_{2^{\tilde{n}}}\).
		
		\item \textbf{Block encoding of \(\tilde{L}_{2^{\tilde{n}}}\) or \(\tilde{L}'_{2^{\tilde{n}}}\).}  
		First, apply \(\Xi_{g}\) (as above) to position the active entries appropriately. 
		In the \emph{closed-chained} case, we then use the gate \(M_{c}\) 
		to move the ``1'' to the correct bottommost position, 
		followed by the \(L\)-shift matrix to obtain the final \(\tilde{L}_{2^{\tilde{n}}}\).  
		In the \emph{open-chained} case, we directly use \(M_{o} = I \otimes L_{2^{\tilde{n}-1}}\) 
		to obtain the block encoding of \(\tilde{L}'_{2^{\tilde{n}}}\).
	\end{enumerate}
	
	As in the $N=2^n$ case, we incorporate these partial gates 
	(\(\tilde{I}_{2^{\tilde{n}}},\tilde{L}_{2^{\tilde{n}}}\) for closed-chained or \(\tilde{I}'_{2^{\tilde{n}}},\tilde{L}'_{2^{\tilde{n}}}\) for open-chained) 
	into a single linear-combination circuit, acquiring $\tilde{\Phi}_F^{c/o}$, as illustrated in Fig. \ref{circ:gen2_phiF}.
	
	Finally, by combining $\tilde{\Phi}_F^{c/o}$ with $\sqrt{M}^{-1}$ and $\sqrt{W}$, 
	one obtains the required block-encoding of $B$ for an $N$-oscillator system, 
	\emph{even when $N \neq 2^n$}. All subsequent Hamiltonian-simulation steps, from QSVT to amplitude amplification, 
	remain unchanged. Some examples of the block encodings of the sub-matrices $\tilde{I}_{2^{\tilde{n}}}$ , $\tilde{I}'_{2^{\tilde{n}}}$ , $\tilde{L}_{2^{\tilde{n}}}$ and $\tilde{L}'_{2^{\tilde{n}}}$ are given in Fig. \ref{fig:exam_circ_IL_BE}. For simplicity, we ignore the LCU components, that is, Hadamard gates and control nodes.
	
	\begin{algorithm*}[ht!]
		\DontPrintSemicolon
		\caption{Circuit construction for block-encoding \(\tilde{I}\) and \(\tilde{L}\) in both open- and closed-chained cases.}
		\label{alg:cap}
		
		\KwIn{Total number of oscillators \(N\) and \texttt{case} $\in\{\tilde{I}_{2^{\tilde n}},\,\tilde{L}_{2^{\tilde n}},\,\tilde{I}'_{2^{\tilde n}},\,\tilde{L}'_{2^{\tilde n}}\}$.}
		\KwOut{The final circuit \(\mathcal{U}\), which serves as a block-encoding of the chosen 
			\(\tilde{I}_{2^{\tilde n}}\), \(\tilde{L}_{2^{\tilde n}}\), \(\tilde{I}'_{2^{\tilde n}}\), or \(\tilde{L}'_{2^{\tilde n}}\).}
		\vspace{5pt}
		
		\textbf{Initialization:}
		\begin{enumerate}
			\item Let ${\tilde{n}} \coloneqq \bigl\lceil\log_2 N\bigr\rceil$.
			\item Prepare an empty circuit \(\mathcal{U}\) with \(({\tilde{n}}+1)\) qubits where the top qubit (qubit $0$) being an ancilla and the remaining \({\tilde{n}}\) encoding indices.
			\item Define a parameter \(g\) based on the chosen case:
			\begin{equation*}
				g \;=\;
				\begin{cases}
					0, & \text{if \texttt{case}} = \tilde{I}_{2^{\tilde n}},\\
					1, & \text{if \texttt{case}} \in \{\tilde{L}_{2^{\tilde n}}, \tilde{I}'_{2^{\tilde n}}, \tilde{L}'_{2^{\tilde n}}\}.
				\end{cases}
			\end{equation*}
		\end{enumerate}
		\smallskip
		
		\textbf{Step 1: Implement the \(\Xi_{g}\) gate.}
		\begin{itemize}
			\item Define 
			\[
			\Xi_{g}
			\;\coloneqq\;
			\sum_{i=0}^{N-g-1}\!\dyad{i}{i}
			\;+\;\!
			\sum_{i=N-g}^{2^{\tilde{n}}-1}\!\dyad{i+2^{\tilde{n}}}{i}
			\;+\;\!
			\sum_{i=2^{\tilde{n}}}^{\,2^{\tilde{n}}+N-g-1}\!\dyad{i}{i}
			\;+\;\!
			\sum_{i=2^{\tilde{n}}+N-g}^{2^{{\tilde{n}}+1}-1}\!\dyad{i-2^{\tilde{n}}}{i}.
			\]
			(Conceptually, \(\Xi_g\) \textit{shifts} certain basis states between the 
			upper and lower halves of the ancilla-index space so that the top-left block 
			becomes an identity (or near-identity).)
			\item In practice, implement \(\Xi_g\) with repeated multi-controlled $\mathrm{NOT}$ gates:
			\begin{enumerate}
				\item Let \(h = 2^{\tilde{n}}\). 
				\item Maintain an index list \(\texttt{notCtrlQubits}=\{\}\).
				\item \textbf{\texttt{for}} \(i=1\) to \({\tilde{n}}\):
				\begin{itemize}
					\item[] \textbf{\texttt{if}} \(\;h - 2^{\,{\tilde{n}}-i} \,\ge\, N - g\), do:
					\begin{enumerate}
						\item Apply \(\mathrm{NOT}\) on all qubits in \(\texttt{notCtrlQubits}\) (if non-empty).
						\item Apply an multi-controlled NOT gate with control qubits on \(\{1,\ldots,i\}\) 
						and target on qubit \(0\).
						\item Re-apply \(\mathrm{NOT}\) on those same qubits in \(\texttt{notCtrlQubits}\).
						\item Append qubit \(i\) to \(\texttt{notCtrlQubits}\).
						\item Update \(h\leftarrow h - 2^{\,{\tilde{n}}-i}\).
					\end{enumerate}
				\end{itemize}
			\end{enumerate}
		\end{itemize}
		
		\textbf{Step 2: Add \(\,M_c\) or \(\,M_o\) gates.}
		\begin{itemize}
			\item \textbf{\texttt{if}} \(\texttt{case}=\tilde{L}_{2^{\tilde n}}\), define
			\(
			M_{c}
			\;=\;
			\sum_{i=0}^{\,2^{\tilde{n}}+N-g-1}\!
			\dyad{i}{i}
			\;+\;
			\dyad{\,2^{n+1}-1}{\,2^{\tilde{n}}+N-g}
			\;+\;
			\dyad{\perp}{\perp}
			\), do:
			\begin{enumerate}
				\item Convert $2^{\tilde{n}} + (N-g)$ into its binary form and toggle specified qubits using CNOT gates with control on qubit 0 and target on $i$th qubit if the value in the bitstring at index $i$ is 0
				\item Apply an $L_{{\tilde{n}}+1}$-shift gate.
			\end{enumerate}
			
			\item \textbf{\texttt{else if}} \(\texttt{case}=\tilde{L}'_{2^{\tilde n}}\), apply $M_o = I \otimes L_{2^{\tilde{n}-1}}$ where $L_{{\tilde{n}}}$-shift gate is operating on remaining $\tilde{n}$ qubits.
			
		\end{itemize}
	\end{algorithm*}
	

	\newpage
	\bibliography{apssamp}
	
\end{document}